\theoremstyle{plain}
\newtheorem{thm}{Theorem}[section]
\newtheorem{lem}[thm]{Lemma}
\newtheorem{prop}[thm]{Proposition}
\theoremstyle{definition}
\theoremstyle{remark}
\newtheorem{rem}[thm]{Remark}
\newcommand{\E}{\operatorname{E}}
\renewcommand{\P}{\operatorname{P}}
\newcommand{\Var}{\operatorname{Var}}
\newcommand{\keywords}[1]{\par\addvspace\baselineskip\text{\it{Keywords}: } \noindent\ignorespaces#1}
\begin{document}
\title{Pricing Bermudan options using nonparametric regression: optimal rates of convergence for
lower
estimates}
\author{Denis Belomestny$^{1,\,}$\thanks{supported in part by the SFB 649 `Economic Risk'.
}}
\footnotetext[1]{Weierstrass Institute for Applied Analysis and
Stochastics, Mohrenstr. 39, 10117 Berlin, Germany.
{\tt{belomest@wias-berlin.de}}. }
\maketitle
\begin{abstract}
The problem of pricing Bermudan options using Monte Carlo and a nonparametric regression
is considered. We derive optimal non-asymptotic bounds for a
lower biased estimate based on the suboptimal stopping rule constructed using
some estimates of continuation values.
These estimates may be of different nature, they may be local or global,
with the only requirement being that the deviations of these estimates
from the true continuation values can be uniformly bounded in probability.
As an illustration, we discuss a class of local polynomial estimates which,
under some regularity conditions, yield continuation values estimates possessing this property.
\keywords{Bermudan options, Nonparametric regression, Boundary condition, Suboptimal stopping rule}
\end{abstract}

\section{Introduction}
An American option grants the holder the right to select the time at which to exercise the option, and in this differs from a European option which may be exercised only at a fixed date.
A general class of American option  pricing problems can be formulated through
an \( \mathbb{R}^{d} \) Markov process \( \{X(t),\, 0\leq t \leq T\}   \)
defined on a filtered probability space \( (\Omega,\mathcal{F},(\mathcal{F}_{t})_{0\leq t\leq T},\P) \).
It is assumed that \( X(t) \) is adapted to \( (\mathcal{F}_{t})_{0\leq t\leq T} \) in the sense that each
\( X_{t} \) is \( \mathcal{F}_{t} \) measurable. Recall that each \( \mathcal{F}_{t} \)
is a \( \sigma \)-algebra of subsets of \( \Omega \) such  that
\( \mathcal{F}_{s}\subseteq \mathcal{F}_{t}\subseteq \mathcal{F} \) for \( s\leq t \). We
interpret \( \mathcal{F}_{t} \) as all relevant financial information available up to time \( t \). We restrict attention to options
admitting a finite set of exercise opportunities \( 0=t_{0}<t_{1}<t_{2}<\ldots<t_{L}=T \), sometimes called Bermudan options. If exercised at time \( t_{l},\, l=1,\ldots, L \), the option
pays \( f_{l}(X(t_l)) \), for some known functions \( f_{0}, f_{1},\ldots, f_{L} \) mapping
\( \mathbb{R}^{d} \) into \( [0,\infty) \). Let \( \mathcal{T}_{n} \) denote the set of stopping times taking values in \( \{ n,n+1,\ldots,L \} \).
A standard result in the theory of contingent claims states that the equilibrium price \( V_{n}(x) \)  of the American option  at time \( t_{n} \) in state \( x \) given that the option was not exercised prior to \( t_{n} \) is its value under an optimal exercise policy:
\begin{eqnarray*}
    V_{n}(x)=\sup_{\tau\in \mathcal{T}_{n}}\E[f_{\tau}(X(t_\tau))|X(t_n)=x],\quad x\in
    \mathbb{R}^{d}.
\end{eqnarray*}
Pricing an American option thus reduces to solving an optimal stopping problem.
Solving this optimal stopping problem and pricing an American option are
straightforward in low dimensions. However, many problems arising
in practice (see e.g. \citet{Gl}) have high dimensions, and these applications
have motivated the development of Monte Carlo methods for pricing American option.
Pricing American style derivatives with Monte Carlo is a challenging task because
the determination of optimal exercise strategies  requires a backwards dynamic
programming algorithm that appears to be incompatible with the forward nature
of Monte Carlo simulation. Much research was focused on the development of
fast methods to compute approximations to the optimal exercise policy.
Notable examples include the functional optimization approach in \citet{A},
mesh method of \citet{BG}, the regression-based approaches of \citet{Car}, \citet{LS}, \citet{TV} and \citet{E}.
A common feature of all above mentioned algorithms is that they deliver
estimates \( \widehat C_{0}(x),\ldots, \widehat C_{L-1}(x) \) for
the so called continuation values:
\begin{eqnarray}
\label{CV}
    C_{k}(x):=\E[V_{k+1}(X(t_{k+1}))|X(t_{k})=x], \quad k=0,\ldots,L-1.
\end{eqnarray}
An estimate for \( V_{0} \), the price of the option at time \( t_{0} \)
can then be defined as
\begin{eqnarray*}
    \widetilde V_{0}(x):=\max\{ f_{0}(x),\widehat C_{0}(x) \}, \quad x\in \mathbb{R}^{d}.
\end{eqnarray*}
This estimate  basically inherits all properties of \( \widehat C_{0}(x) \). In particular, it is
usually impossible  to determine the sign of the bias of \( \widetilde V_{0}\)
since the bias of \( \widehat C_{0}  \) may change its sign.
One way to get
a lower bound (low biased estimate) for \( V_{0} \)  is to  construct a (generally suboptimal) stopping rule
\begin{eqnarray*}
    \widehat\tau=\min\{0\leq k \leq L: \widehat C_{k}(X(t_{k}))\leq f_{k}(X(t_{k}))\}
\end{eqnarray*}
with \( \widehat C_{L}\equiv 0 \) by definition. Simulating a new
independent set of trajectories and averaging
the pay-offs stopped according to \( \widehat\tau \) on these trajectories gives us a lower bound
\( \widehat V_{0} \) for  \( V_{0} \).
As was observed by practitioners, the so constructed estimate \( \widehat V_{0} \) has rather stable behavior with
respect to  the estimates of continuation values \( \widehat C_{0}(x),\ldots, \widehat C_{L-1}(x) \), that is even rather
poor estimates of continuation values may lead to a good estimate \( \widehat V_{0}. \)
The aim of this paper is to find a theoretical explanation of this observation and
to investigate the properties of \( \widehat V_{0} \). In particular, we derive optimal
non-asymptotic bounds
for the bias \( V_{0}-\E\widehat V_{0} \)  assuming some uniform probabilistic bounds for \( C_{r}-\widehat C_{r}  \).
It is shown that the bounds for \( V_{0}-\E\widehat V_{0} \) are usually much tighter than ones
for \( V_{0}-\E\widetilde V_{0}  \) implying a better quality of \( \widehat V_{0} \) as
compared to the quality of \( \widetilde V_{0} \) constructed using  one and the same set
of estimates for continuation values. As an example, we consider the class of local polynomial estimators for
continuation values and derive explicit  convergence rates for \( \widehat V_{0} \) in this case.
\par
The issues of convergence for regression algorithms have been already studied in several papers.
\citet{CLP} were first who proved the convergence of
the Longstaff-Schwartz algorithm. \citet{GY} have shown that the
number of Monte Carlo paths has  to be in general exponential in the number of
basis functions used for regression in order to ensure convergence.
Recently, \citet{EKT}   have derived the rates of convergence
for continuation values estimates obtained by the so called dynamic
look-ahead algorithm (see Egloff (2004))  that
``interpolates'' between Longstaff-Schwartz and  Tsitsiklis-Roy algorithms.
As was shown in these papers   the  convergence rates for  \( \widetilde V_{0} \)  coincide with the
rates of \( \widehat C_{0} \) and are  determined by the smoothness
properties of the true continuation values \( C_{0},\ldots,C_{L-1} \). It turns out that the  convergence rates
for \( \widehat V_{0} \) depend  not only on the smoothness of continuation values (as opposite to \( \widetilde V_{0} \)), but
also on the behavior of the underlying process near the exercise boundary. Interestingly enough,
there are some cases where these rates  become almost  independent either of
the smoothness properties of \( \{ C_{k} \} \) or of the dimension of \( X \)
and the bias
of \( \widehat V_{0} \) decreases exponentially in the number of Monte Carlo
paths used to construct \( \{ \widehat C_{k} \} \).
\par
The paper is organized as follows. In Section~\ref{SBA} we introduce and discuss
the so called  boundary assumption
which describes the behavior of the underlying process \( X \) near
the exercise boundary
and heavily influences   the properties of
\( \widehat V_{0} \). In Section~\ref{SRC}
we derive non-asymptotic bounds for the bias \( V_{0}-\E\widehat V_{0} \) and prove that
these bounds are optimal in the minimax sense. In Section~\ref{SLP} we consider the class
of local polynomial estimates  and propose a sequential algorithm based on the
dynamic programming principle to estimate all continuation values. Finally, under
some regularity assumptions, we derive exponential bounds for the corresponding
continuation values estimates and consequently the bounds  for
the bias \( V_{0}-\E\widehat V_{0} \).

\section{Main results}
\subsection{Boundary assumption}
\label{SBA}
For the considered Bermudan option let us introduce
a continuation region $\mathcal{C}$ and  an exercise (stopping) region $%
\mathcal{E}$ :
\begin{eqnarray}
\mathcal{C} &:=&\left\{ (i,x):f_{i}(x)<C_{i}(x)\right\} ,  \label{U0}
\\
\mathcal{E} &:=&\left\{ (i,x):f_{i}(x)\geq C_{i}(x)\right\} .  \notag
\end{eqnarray}
Furthermore, let us assume that there exist constants $B_{0,k}>0$,
\( k=0,\ldots,L-1 \) and $\alpha >0$ such that the inequality
\begin{equation}
\label{BA}
\P_{t_{k}|t_{0}}(0<|C_{k}(X(t_k))-f_{k}(X(t_k))|\leq \delta)\leq B_{0,k}\delta^{\alpha },\quad \delta>0,
\end{equation}
holds for all \( k=0,\ldots,L-1 \), where \( \P_{t_{k}|t_{0}} \) is the conditional distribution
of \( X(t_{k}) \) given \( X(t_{0}) \).
Assumption \eqref{BA} provides a useful characterization of the behavior of the
continuation values $\{ C_{k} \}$ and payoffs \( \{ f_{k} \} \) near the exercise boundary \( \partial \mathcal{E} \).
Although this assumption seems quite natural to look at,
we make in this paper, to the best of our knowledge,  a first attempt to investigate
its influence  on the convergence rates of lower bounds
based on suboptimal stopping rules. We note that a similar condition, although much simpler, appears in the context of statistical classification problem (see, e.g. \citet{MT} and \citet{AT}).
\par
In the  situation when all functions \( C_{k}-f_{k},\, k=0,\ldots,L-1\) are smooth
and have
non-vanishing derivatives in the vicinity of the exercise boundary,
we have \( \alpha=1 \). Other values of \( \alpha \) are possible as well.
We illustrate this  by two simple examples.
\paragraph{Example 1}
Fix some \( \alpha>0 \) and consider a two period (\( L=1\))
Bermudan power put option with the payoffs
\begin{eqnarray}
\label{PCPayOff}
& f_{0}(x)=f_{1}(x)=(K^{1/\alpha}-x^{1/\alpha})^{+}, \quad x\in \mathbb{R}_{+},  \quad K>0.
\end{eqnarray}
Denote by \( \Delta  \) the length of the exercise period, i.e.
\( \Delta=t_{1}-t_{0}. \)
If the process  \( X \) follows the Black-Scholes model with volatility \( \sigma \) and zero
interest rate,
then one can show that
\begin{multline*}
C_{0}(x):=\E[f_{1}(X(t_{1}))|X(t_{0})=x]=
K^{1/\alpha}\Phi(-d_{2})
\\
-x^{1/\alpha}e^{\Delta (\alpha^{-1}-1)(\sigma^{2}/2\alpha)}
\Phi(-d_{1})
\end{multline*}
with \( \Phi \) being the cumulative distribution function of the standard normal distribution,
\begin{eqnarray*}
    d_{1}=\frac{\log(x/K)+\left( \frac{1}{\alpha}-\frac{1}{2} \right)\sigma^{2} \Delta  }{\sigma\sqrt{\Delta }}
\end{eqnarray*}
and \( d_{2}=d_{1}-\sigma\sqrt{\Delta }/\alpha. \)
As can be easily seen, the function \( C_{0}(x)-f_{0}(x) \) satisfies
\( |C_{0}(x)-f_{0}(x)|\asymp x^{1/\alpha} \) for \( x\to +0 \) and
\( C_{0}(x)>f_{0}(x) \) for all \( x>0 \) if \( \alpha\geq 1 \).
Hence
\begin{eqnarray*}
\P(0<|C_{0}(X(t_{0}))-f_{0}(X(t_{0}))|\leq \delta)\lesssim \delta^{\alpha }, \quad \delta\to
0, \quad \alpha\geq 1.
\end{eqnarray*}
Taking different \( \alpha \) in the definition of  the payoffs \eqref{PCPayOff}, we get \eqref{BA}
satisfied for \( \alpha \) ranging from \( 1 \) to \( \infty \).

\begin{figure}[pth]
\centering \includegraphics[width=8cm]{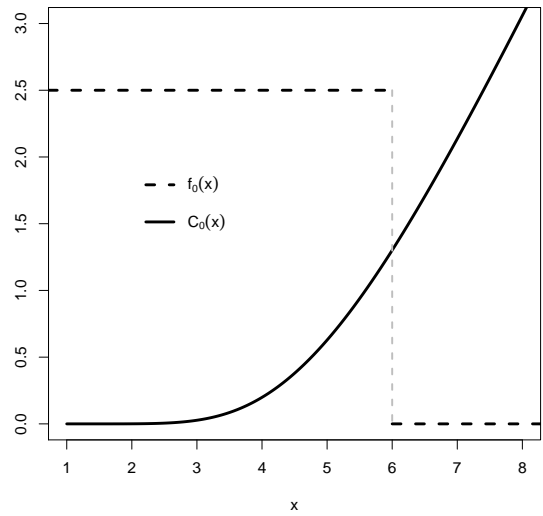}
\caption{ Illustration to  Example 2.}%
\label{BoundaryA}%
\end{figure}
In fact, even the extreme case ``\( \alpha=\infty \)''
may take place as shown in the next example.
\paragraph{Example 2}
Let us consider again a two period Bermudan option such that
the corresponding continuation value \( C_{0}(x)=\E[f_{1}(X(t_{1}))|X(t_{0})=x] \)
is positive and  monotone increasing function of \( x \) on any compact
set in \( \mathbb{R} \).
Fix some \( x_{0}\in \mathbb{R} \) and choose  \( \delta_{0} \) satisfying
\( \delta_{0}<C_{0}(x_{0}) \). Define the payoff function \( f_{0}(x) \)
in the following way
\begin{eqnarray*}
f_{0}(x)=
\begin{cases}
  C_{0}(x_{0})+\delta_{0},&\,x<x_{0},\\
  C_{0}(x_{0})-\delta_{0},&\,x\geq x_{0}.
\end{cases}
\end{eqnarray*}
So, \( f_{0}(x) \) has a ``digital'' structure.
Figure~\ref{BoundaryA} shows the plots of \( C_{0} \) and \( f_{0} \) in the case
where \( X \) follows the Black-Scholes model and \( f_{1}(x)=(x-K)^{+} \).
It is easy to see that
\begin{eqnarray*}
\P_{t_{0}}(0<|C_{0}(X(t_0))-f_{0}(X(t_0))|\leq \delta_{0})=0.
\end{eqnarray*}
On the other hand
\begin{eqnarray*}
\mathcal{C}&=&\{x\in \mathbb{R}: C_{0}(x)\geq f_{0}(x) \}=\{ x\in \mathbb{R}: x\geq
x_{0}\},
\\
\mathcal{E}&=&\{x\in \mathbb{R}: C_{0}(x)<f_{0}(x) \}=\{ x\in \mathbb{R}: x< x_{0}\}.
\end{eqnarray*}
So, both continuation and exercise regions are not trivial in this case.
\par
The last example is of particular interest because as will be shown
in the next sections the bias of \( \widehat V_{0} \)
decreases  in this case exponentially in the number of Monte Carlo paths used to
estimate the continuation values, the lower bound \( \widehat V_{0} \)  was constructed from.

\subsection{Non-asymptotic bounds for \( V_{0}-\E\widehat V_{0} \)}
\label{SRC}
Let \( \widehat C_{k,M}, \, k=1,\ldots,L-1, \) be some estimates of continuation
values obtained using \( M \)
paths of the underlying process \( X \) starting from \( x_{0} \) at time \( t_{0} \).
We may think of \( (X^{(1)}(t),\ldots,X^{(M)}(t))\) as being
a vector process on the product probability space with \( \sigma \)-algebra \( \mathcal{F}^{\otimes M}
\) and the product measure \( \P_{x_{0}}^{\otimes M} \) defined on \( \mathcal{F}^{\otimes M} \) via
\[
\P_{x_{0}}^{\otimes M}(A_{1}\times\ldots\times A_{M})=
\P_{x_{0}}(A_{1})\cdot\ldots \cdot \P_{x_{0}}(A_{M}),
\]
with  \( A_{m}\in \mathcal{F},\, m=1,\ldots,M \).
Thus, each \( \widehat C_{k,M}, \, k=0,\ldots, L-1, \) is measurable with respect to \( \mathcal{F}^{\otimes
M}\).
The following proposition provides non-asymptotic bounds for the bias
\( V_{0}-\E_{\P_{x_{0}}^{\otimes M}}[V_{0,M}] \)
given  uniform probabilistic bounds for \( \{ \widehat C_{k,M}\} \).
\begin{prop}
\label{CR}
Suppose that there exist constants \( B_{1}, \, B_{2} \) and a positive sequence \( \gamma_{M} \) such that
for any \( \delta>\delta_{0}>0 \) it holds
\begin{eqnarray}
\label{ExpB}
    \P_{x_{0}}^{\otimes M}
    \left(|\widehat C_{k,M}(x)-C_{k}(x)|\geq
    \delta\gamma^{-1/2}_{M}\right)\leq B_{1}\exp(-B_{2}\delta)
\end{eqnarray}
for almost all \( x \) with respect to
\( \P_{t_{k}|t_{0}} \), the conditional distribution
of \( X(t_{k}) \) given \( X(t_{0}) \), \(k=0,\ldots, L-1 \). Define
\begin{eqnarray}
    \label{LowerBound}
     V_{0,M}:=\E\left[f_{\widehat\tau_{M}}(X(t_{\widehat\tau_{M}}))|X(t_{0})=x_{0}\right]
\end{eqnarray}
with
\begin{eqnarray}
\label{StopRule}
    \widehat\tau_{M}:=\min\left\{0\leq k \leq L:
    \widehat C_{k,M}(X(t_{k}))\leq f_{k}(X(t_{k}))\right\}.
\end{eqnarray}
If the boundary condition \eqref{BA} is fulfilled, then
\begin{eqnarray*}
    0\leq V_{0}-\E_{\P_{x_{0}}^{\otimes M}}[V_{0,M}]\leq B \left[ \sum_{l=0}^{L-1}B_{0,l} \right]\gamma^{-(1+\alpha)/2}_{M}
\end{eqnarray*}
with some constant \( B \) depending only on \( \alpha \), \( B_{1} \) and \( B_{2} \).
\end{prop}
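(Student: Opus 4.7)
\emph{Non-negativity and plan.} That $V_0-\E_{\P_{x_0}^{\otimes M}}[V_{0,M}]\geq 0$ is immediate: for every realisation of $\{\widehat C_{k,M}\}$ the rule $\widehat\tau_M$ belongs to $\mathcal T_0$, hence $V_{0,M}\leq V_0$, and averaging preserves the inequality. For the matching upper bound I would split the argument into (i) a pathwise per-step regret decomposition against the true Snell envelope $V_k(x)=\max(f_k(x),C_k(x))$, and (ii) a probabilistic moment estimate combining the exponential tail \eqref{ExpB} with the boundary condition \eqref{BA}.

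\emph{(i) Pathwise step-regret decomposition.} Define the suboptimal value functions
\begin{equation*}
V_{k,M}(x):=\E\bigl[f_{\widehat\tau_{M,k}}(X(t_{\widehat\tau_{M,k}}))\bigm|X(t_k)=x\bigr],\quad \widehat\tau_{M,k}:=\min\{k\leq j\leq L:\widehat C_{j,M}(X(t_j))\leq f_j(X(t_j))\},
\end{equation*}
so that the $k=0$ case is the quantity in \eqref{LowerBound}. A four-case analysis on the signs of $C_k(x)-f_k(x)$ and $\widehat C_{k,M}(x)-f_k(x)$, using the suboptimality bound $V_{k+1,M}\leq V_{k+1}$ to dominate $V_{k,M}$ by $C_k$ in the branch where the estimated rule continues, yields the one-step recursion
\begin{equation*}
(V_k-V_{k,M})(x)\leq |C_k-f_k|(x)\mathbf{1}\{D_k(x)\}+\E[(V_{k+1}-V_{k+1,M})(X(t_{k+1}))\mid X(t_k)=x]\mathbf{1}\{\widehat C_{k,M}(x)>f_k(x)\},
\end{equation*}
where $D_k(x)$ is the disagreement event $\{(C_k(x)-f_k(x))(\widehat C_{k,M}(x)-f_k(x))<0\}$. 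Iterating backwards from $k=L-1$ to $0$ and dropping the continuation indicators gives
\begin{equation*}
V_0-V_{0,M}\leq\sum_{k=0}^{L-1}\E_{x_0}\bigl[|C_k-f_k|(X(t_k))\mathbf{1}\{D_k\}\bigr].
\end{equation*}

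\emph{(ii) Probabilistic bound.} Taking $\E_{\P_{x_0}^{\otimes M}}$ and using that the evaluation path $\{X(t_k)\}$ is independent of the $M$ paths defining $\widehat C_{k,M}$, Fubini turns the $k$-th summand into
\begin{equation*}
\int|C_k-f_k|(x)\,\P_{x_0}^{\otimes M}\bigl(|\widehat C_{k,M}(x)-C_k(x)|\geq|C_k-f_k|(x)\bigr)\,d\P_{t_k|t_0}(x),
\end{equation*}
since on $D_k$ necessarily $|\widehat C_{k,M}-C_k|\geq|C_k-f_k|$. Writing $\Delta(x):=|C_k-f_k|(x)$, I would split this integral at $\Delta=\delta_0\gamma_M^{-1/2}$: on the small-$\Delta$ region \eqref{BA} directly yields a contribution of order $B_{0,k}\delta_0^{1+\alpha}\gamma_M^{-(1+\alpha)/2}$, while on the large-$\Delta$ region \eqref{ExpB} applied with $\delta=\gamma_M^{1/2}\Delta(x)$ reduces the task to bounding $B_1\int\Delta(x)e^{-B_2\gamma_M^{1/2}\Delta(x)}\,d\P_{t_k|t_0}(x)$; an integration by parts against the push-forward tail estimate $\P_{t_k|t_0}(0<\Delta\leq\delta)\leq B_{0,k}\delta^\alpha$ produces a bound of order $B_{0,k}\gamma_M^{-(1+\alpha)/2}$ with an $(\alpha,B_1,B_2)$-prefactor. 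Summing over $k$ and absorbing all constants gives the claim.

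\emph{Main obstacle.} The substance of the argument lies in the one-step identity of (i). The suboptimality relation $V_{k+1,M}\leq V_{k+1}$ is what keeps the regret at a disagreement step at the \emph{sharp} value $|C_k-f_k|$ rather than the naive $|\widehat C_{k,M}-f_k|$ or $|\widehat C_{k,M}-C_k|$, and this is exactly what allows the boundary exponent $\alpha$ to surface in the final rate $\gamma_M^{-(1+\alpha)/2}$; it is the optimal-stopping counterpart of the Tsybakov excess-risk bound under the low-noise margin condition alluded to in the introduction. Without this sharpening one would recover only the crude rate $\gamma_M^{-1/2}$, independent of $\alpha$.
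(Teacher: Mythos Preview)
Your proposal is correct and follows essentially the same route as the paper: step~(i) is precisely the paper's key lemma (the pathwise step-regret inequality, proved there by backward induction and yielding exactly your sum of $|C_l-f_l|\mathbf{1}\{D_l\}$ terms), and step~(ii) differs only cosmetically, the paper using a dyadic-shell decomposition $\{2^{j-1}\gamma_M^{-1/2}<|C_l-f_l|\le 2^{j}\gamma_M^{-1/2}\}$ in place of your integration by parts against the margin tail. Both devices combine \eqref{ExpB} with \eqref{BA} in the same way and give the same constant structure.
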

The above convergence rates can not be in general improved as shown in the next theorem.
\begin{prop}
\label{LowerBound}
Let \( L=2 \). Fix a pair of non-zero payoff functions \( f_{1},f_{2} \) such that
\( f_{2}: \mathbb{R}^{d}\to \{ 0,1 \} \) and \( 0<f_{1}(x)<1 \) on \( [0,1]^{d}. \)
Let \( \mathcal{P}_{\alpha} \) be a class of pricing  measures
such that the
boundary condition \eqref{BA} is fulfilled with some \( \alpha>0 \).
For any positive sequence \( \gamma_{M} \) satisfying
\[
  \gamma^{-1}_{M}=o(1), \quad \gamma_{M}=O(M), \quad M\to \infty,
\]
there exist a subset \( \mathcal{P}_{\alpha,\gamma}\) of \( \mathcal{P}_{\alpha} \)
and a constant \( B>0 \) such that for any \( M\geq 1 \), any
stopping rule \( \widehat\tau_{M} \) and any set of estimates
\( \{ \widehat C_{k,M} \} \)  measurable w.r.t. \( \mathcal{F}^{\otimes M} \),
we have for some \( \delta>0 \) and \( k=1,2, \)
\begin{eqnarray*}
 \sup_{\P\in \mathcal{P}_{\alpha,\gamma}}\P^{\otimes M}\left(|\widehat C_{k,M}(x)-C_{k}(x)|\geq \delta\gamma^{-1/2}_{M}\right)>0
\end{eqnarray*}
for almost all \( x  \) w.r.t. any \( \P\in \mathcal{P}_{\alpha,\gamma} \)
and
\begin{eqnarray*}
 \sup_{\P\in \mathcal{P}_{\alpha,\gamma}}\left\{ \sup_{\tau\in\mathcal{T}_{0}}\E^{\mathcal{F}_{t_{0}}}_{\P}[f_{\tau}(X(t_\tau))]-\E_{\P^{\otimes M}}[\E^{\mathcal{F}_{t_{0}}}_{\P}f_{\widehat\tau_{M}}(X(t_{\widehat\tau_{M}}))] \right\}\geq B
 \gamma^{-(1+\alpha)/2}_{M}.
\end{eqnarray*}
\end{prop}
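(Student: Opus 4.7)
The plan is to adapt the classical minimax lower bound for plug-in classifiers under the Tsybakov margin condition (cf.~\citet{MT},~\citet{AT}) to the optimal stopping setting via an Assouad cube construction. With $L=2$ and $f_1,f_2$ fixed, the only non-trivial decision is at $t_1$ and is precisely that of sign-classifying $D_1:=C_1-f_1$; since $C_1(x)=\E[f_2(X(t_2))\mid X(t_1)=x]$, the adversary can tune $C_1$ by perturbing only the transition kernel $X(t_1)\mapsto X(t_2)$.

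I would fix a base measure $\P_0$ on $[0,1]^d$ under which $X(t_1)$ is uniform and $C_1^{(0)}$ stays bounded away from $f_1$, partition a sub-region of $[0,1]^d$ into $m$ disjoint cubes $R_1,\ldots,R_m$ of edge length $h$, and for $\sigma\in\{-1,+1\}^m$ define $\P_\sigma$ to equal $\P_0$ outside $\bigcup_j R_j$ while, on each $R_j$, re-weighting the conditional law of $X(t_2)\mid X(t_1)\in R_j$ so that $D_1^{(\sigma)}(x)=\sigma_j\,\psi$ on $R_j$ with $\psi\asymp\gamma_M^{-1/2}$. Calibrating $(m,h,\psi)$ so that \eqref{BA} holds forces $mh^d\asymp\psi^{\alpha}\asymp\gamma_M^{-\alpha/2}$; the adversary's class is then $\mathcal{P}_{\alpha,\gamma}:=\{\P_\sigma:\sigma\in\{-1,+1\}^m\}$.

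For any stopping rule $\widehat\tau_M$ the $\P_\sigma$-regret equals the integral of $|D_1^{(\sigma)}|=\psi$ over the cells where the wrong action is taken, i.e.~an excess classification risk. Assouad's lemma applied to neighbouring hypercube vertices yields
\begin{eqnarray*}
\sup_{\sigma}\E_{\P_\sigma^{\otimes M}}[\mathrm{regret}]\gtrsim m\,\psi\,h^d\bigl(1-\sqrt{M\,\mathrm{KL}(\P_\sigma\|\P_{\sigma'})/2}\bigr).
\end{eqnarray*}
Since the two kernels differ on an $X(t_1)$-set of mass $\asymp h^d$ by a perturbation of size $\psi$, one has $\mathrm{KL}=O(\psi^2 h^d)$; the hypothesis $\gamma_M=O(M)$ keeps $M\psi^2 h^d$ bounded and the Assouad factor away from zero, so the regret is $\gtrsim m\psi h^d\asymp\psi^{1+\alpha}\asymp\gamma_M^{-(1+\alpha)/2}$. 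The concentration claim falls out of a two-point Le Cam argument applied to $\{\P_\sigma,\P_{-\sigma}\}$: the continuation values there differ by $2\psi\asymp\gamma_M^{-1/2}$ on a positive-mass set while the $M$-fold total-variation distance remains bounded, so no $\widehat C_{k,M}$ can satisfy \eqref{ExpB} under both laws for some small $\delta>0$.

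The main obstacle is the joint calibration of $h$, $m$ and $\psi$: the margin exponent $\alpha$ forces $mh^d\asymp\psi^{\alpha}$, the Assouad step demands $M\psi^2 h^d\lesssim 1$, and the exponential rate fixes $\psi\asymp\gamma_M^{-1/2}$. Showing that these three constraints are simultaneously realisable in the regime $\gamma_M^{-1}=o(1)$, $\gamma_M=O(M)$, and that each perturbed kernel remains a legitimate sub-probability kernel compatible with the prescribed $\{0,1\}$-valued $f_2$ and $(0,1)$-valued $f_1$, is the core technical step; everything downstream is the standard Audibert--Tsybakov tuning.
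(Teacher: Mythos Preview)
Your proposal is correct and follows essentially the same construction as the paper: a hypercube $\{\P_{\bar\sigma}\}_{\bar\sigma\in\{-1,+1\}^m}$ obtained by perturbing $C_1-f_1$ by $\pm\psi$ with $\psi\asymp\gamma_M^{-1/2}$ on $m$ disjoint cells of mass $\omega$ (your $h^d$), calibrated via $m\omega\asymp\gamma_M^{-\alpha/2}$ for the boundary condition and $M\omega\lesssim\gamma_M$ for indistinguishability, together with the identification of the regret with $\E[|C_1-f_1|\mathbf{1}_{\{\widehat\tau_{1,M}\neq\tau_1\}}]$. The only technical difference is that the paper applies Birg\'e/Huber's lemma to bound $\sup_{\bar\sigma}\P_{\bar\sigma}^{\otimes M}(\widehat\tau_{1,M}\neq\tau_1)$ directly, whereas you use Assouad's pairwise comparison; the two routes give the same rate here and the remaining calibration is identical.
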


Finally, we discuss the case when ``\( \alpha=\infty \)'',  meaning that there exists
\( \delta_{0}>0 \) such that
\begin{eqnarray}
\label{BAF}
\P_{t_{k}|t_{0}}(0<|C_{k}(X(t_k))-f_{k}(X(t_k))|\leq \delta_{0})=0
\end{eqnarray}
for \( k=0,\ldots,L-1. \) This is very favorable situation for the pricing of the corresponding Bermudan option. It turns out
that if  the continuation
values estimates \( \{ \widehat C_{k,M} \} \) satisfy a kind of exponential inequality
and \eqref{BAF} holds, then the bias of \(V_{0,M} \) converges to zero
exponentially fast in \( \gamma_{M} \).
\begin{prop}
\label{CRF}
Suppose that for any \( \delta>0 \) there exist constants \( B_{1}, \, B_{2} \)
possibly depending on \( \delta \) and a sequence of positive numbers \( \gamma_{M} \) not depending on \( \delta \)
such that
\begin{eqnarray}
\label{ExpBF}
    \P_{x_{0}}^{\otimes M}
    \left(|\widehat C_{k,M}(x)-C_{k}(x)|\geq
    \delta\right)\leq B_{1}\exp(-B_{2}\gamma_{M})
\end{eqnarray}
for almost all \( x \)
with respect to \( \P_{t_{k}|t_{0}} \), \(k=0,\ldots, L-1 \). Assume also that
there exists a constant \( B_{f}>0 \) such that
\begin{equation}
\E\left[\max_{k=0,\ldots,L}f^{2}_{k}(X(t_{k})) \right]\leq B_{f}.
\end{equation}
If the condition \eqref{BAF} is fulfilled with some \( \delta_{0}>0 \), then
\begin{eqnarray*}
    0\leq V_{0}-\E_{\P_{x_{0}}^{\otimes M}}[V_{0,M}]\leq B_{3} L\exp(-B_{4}\gamma_{M})
\end{eqnarray*}
with some constant \( B_{3} \) and \( B_{4} \) depending only on \( B_{1} \), \( B_{2} \)
and \( B_{f} \).
\end{prop}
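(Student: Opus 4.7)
My plan is to follow the same path as Proposition~\ref{CR}, but to replace the polynomial factor coming from the boundary condition by an exponential one. First, I would derive the standard pathwise decomposition of the loss via backward induction on the Snell envelope. Letting \( V_{k} \) and \( \widehat V_{k,M} \) denote respectively the optimal value and the value under the suboptimal rule \eqref{StopRule} starting from time \( t_{k} \), a case analysis of ``stop vs. continue'' at step \( k \) yields
\begin{equation*}
  V_{k} - \widehat V_{k,M} \leq \E[V_{k+1}-\widehat V_{k+1,M}\,|\,\mathcal{F}_{t_{k}}] + |C_{k}(X(t_{k}))-f_{k}(X(t_{k}))|\,\mathbf{1}_{A_{k}},
\end{equation*}
where \( A_{k} \) is the event that \( \widehat C_{k,M}(X(t_{k})) \) and \( C_{k}(X(t_{k})) \) lie on opposite sides of \( f_{k}(X(t_{k})) \). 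Iterating and taking expectations on the product space delivers
\begin{equation*}
  0 \leq V_{0}-\E_{\P_{x_{0}}^{\otimes M}}[V_{0,M}] \leq \sum_{k=0}^{L-1} \E\bigl[|C_{k}(X(t_{k}))-f_{k}(X(t_{k}))|\,\mathbf{1}_{A_{k}}\bigr],
\end{equation*}
where the outer expectation is now taken over both the training sample generating \( \{\widehat C_{k,M}\} \) and an independent evaluation trajectory of \( X \).

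The decisive step is the use of \eqref{BAF}. Outside a \( \P_{t_{k}|t_{0}} \)-null set, either \( |C_{k}(X(t_{k}))-f_{k}(X(t_{k}))|=0 \), in which case the summand vanishes, or \( |C_{k}(X(t_{k}))-f_{k}(X(t_{k}))|>\delta_{0} \). On the latter event, membership in \( A_{k} \) forces
\[ |\widehat C_{k,M}(X(t_{k}))-C_{k}(X(t_{k}))| \geq |C_{k}(X(t_{k}))-f_{k}(X(t_{k}))| > \delta_{0}, \]
so that \( \mathbf{1}_{A_{k}} \) is dominated by \( \mathbf{1}_{\{|\widehat C_{k,M}(X(t_{k}))-C_{k}(X(t_{k}))|\geq\delta_{0}\}} \). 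Applying Cauchy--Schwarz after conditioning on the evaluation path (and exploiting the independence of the training paths from \( X \)), together with \eqref{ExpBF} at the fixed level \( \delta=\delta_{0} \), gives
\begin{equation*}
  \E\bigl[|C_{k}(X(t_{k}))-f_{k}(X(t_{k}))|\,\mathbf{1}_{A_{k}}\bigr] \leq \sqrt{\E[(C_{k}(X(t_{k}))-f_{k}(X(t_{k})))^{2}]}\,\sqrt{B_{1}\exp(-B_{2}\gamma_{M})}.
\end{equation*}

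Finally, the second-moment factor is controlled by the assumed payoff bound: since \( |C_{k}(X(t_{k}))| \leq \E[\max_{l} f_{l}(X(t_{l}))\,|\,\mathcal{F}_{t_{k}}] \), Jensen's inequality together with \( (a-b)^{2}\leq 2a^{2}+2b^{2} \) yields \( \E[(C_{k}(X(t_{k}))-f_{k}(X(t_{k})))^{2}]\leq 4B_{f} \). Summing over \( k=0,\ldots,L-1 \) produces the claim with, for instance, \( B_{3}=2\sqrt{B_{f}B_{1}} \) and \( B_{4}=B_{2}/2 \). The main obstacle is really the pathwise decomposition in the first paragraph: one must verify that \( V_{k+1}\geq\widehat V_{k+1,M} \) pathwise (which follows from optimality of the Snell envelope) and then carry out a careful sign analysis to isolate \( A_{k} \) as the unique per-step source of loss. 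Once this decomposition is in hand, the remaining steps are short and entirely standard.
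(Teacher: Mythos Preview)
Your proposal is correct and follows essentially the same route as the paper: both invoke the pathwise decomposition (Lemma~\ref{BI}) to reduce to $\sum_{l}\E[|C_{l}-f_{l}|\mathbf{1}_{\mathcal{E}_{l}}]$, use \eqref{BAF} to discard the region $\{0<|C_{l}-f_{l}|\le\delta_{0}\}$, observe that on $\mathcal{E}_{l}\cap\{|C_{l}-f_{l}|>\delta_{0}\}$ one has $|\widehat C_{l,M}-C_{l}|>\delta_{0}$, and then apply Cauchy--Schwarz together with \eqref{ExpBF} and the payoff moment bound. The only cosmetic difference is that the paper writes $C_{l}=\E^{\mathcal{F}_{t_{l}}}[f_{\tau_{l+1}}(X(t_{\tau_{l+1}}))]$ and invokes Minkowski to get the factor $2B_{f}^{1/2}$, whereas you use $|C_{k}|\le\E[\max_{l}f_{l}\mid\mathcal{F}_{t_{k}}]$ with Jensen and $(a-b)^{2}\le 2a^{2}+2b^{2}$; the resulting constants $B_{3}=2\sqrt{B_{f}B_{1}}$, $B_{4}=B_{2}/2$ match.
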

\paragraph{Discussion}
Let us make a few remarks on the results of this section.
First, Proposition~\ref{CR} implies that the  convergence rates  of  \( \widehat V_{0,M} \), a Monte Carlo estimate for \( V_{0,M} \), are always faster than the convergence
rates of \( \{ \widehat C_{k,M} \} \) provided that \( \alpha>0 \). Indeed, while the
convergence rates of \( \{ \widehat C_{k,M} \} \) are of order \( \gamma_{M}^{-1/2} \),
the bias of \( \widehat V_{0,M} \) converges to zero as fast as \( \gamma_{M}^{-(1+\alpha)/2}. \)
As to the variance of \( \widehat V_{0,M} \), it can be made arbitrary small
by  averaging \( \widehat V_{0,M} \) over a large number of sets,
each consisting of \( M \) trajectories, and by taking a large number  of new independent Monte Carlo paths used to average the payoffs
stopped according to \( \widehat\tau_{M}. \)
\par
Second, if the condition \eqref{BAF} holds true, then the bias of \( \widehat V_{0,M} \)
decreases exponentially in \( \gamma_{M} \), indicating that
even very unprecise estimates of  continuation values would lead to
the estimate \( \widehat V_{0,M} \) of acceptable quality.
\par
Finally, let us stress that the results obtained in this section are quite general
and do not depend on the particular form of the estimates
\( \{ \widehat C_{k,M} \} \),
only the inequality \eqref{ExpB} being crucial for the results to hold. This
inequality
holds for various types of estimators. These may be global least squares estimators, neural networks (see \citet{KKT})
or local polynomial estimators.
The latter type of estimators has  not yet been  well investigated (see, however,
\citet{BMS} for some empirical results) in the context of pricing Bermudan option and we are going to fill this gap.
In the next sections we will show that if  all continuation values \( \{ C_{k} \} \) belong
to the H\"older class \( \Sigma(\beta,H,\mathbb{R}^{d}) \) and the conditional law of
\( X \) satisfies some regularity assumptions, then
local polynomial estimates of continuation values satisfy inequality \eqref{ExpB} with
\( \gamma_{M}=M^{2\beta/(2(\beta+\nu)+d)}\log^{-1}(M) \) for some \( \nu\geq 0 \).
\begin{rem}
In the case of projection estimates for continuation values,
some nice bounds were recently derived
in \citet{VR}.  Let \( \{ X_{k}, \, k=0,\ldots, L \}  \) be an ergodic Markov chain with the invariant distribution \( \pi  \) and \( f_{0}(x)\equiv\ldots\equiv f_{L}(x)\equiv f(x), \) then \( C_{0}\equiv\ldots \equiv C_{L-1}(x)=C(x), \) provided that \( X_{0} \) is distributed according to \( \pi  \).
 Furthermore, suppose that an estimate \( \widehat C(x) \) for the continuation value \( C(x) \) is available and satisfies a projected Bellman equation
 \begin{equation}
 \label{BE}
 \widehat C(x)=e^{-\rho}\Pi \E_{\pi}[\max\{ f(X_{1}), \widehat C(X_{1})) \}|X_{0}=x], \quad \rho>0,
 \end{equation}
 where \( \Pi \) is the corresponding projection operator. Define
\[ 
\widehat V_{0}(x):=\E\left[f_{\widehat\tau}(X_{\widehat\tau})|X_0=x\right] 
\]
with
\begin{eqnarray*}
    \widehat\tau:=\min\left\{0\leq k \leq L:
    \widehat C(X_{k})\leq f(X_{k})\right\},
\end{eqnarray*}
then as shown in \citet{VR}
\begin{eqnarray}
\label{VRI}
\left[ \E_{\pi }|V_{0}(X_{0})-\widehat V_{0}(X_{0})|^{2} \right]^{1/2} \leq D\left[ \E_{\pi }|C(X_{0})-\Pi C(X_{0})|^{2} \right]^{1/2}
\end{eqnarray}
with some absolute constant \( D \) depending on \( \rho  \) only. The inequality \eqref{VRI} indicates that the quantity
\[
\left[ \E_{\pi }|V_{0}(X_{0})-\widehat V_{0}(X_{0})|^{2} \right]^{1/2}
\]
might be much smaller than
\( \sup_{x}|C(x)-\widehat C(x)| \) and hence qualitatively supports the same sentiment as in our paper.
\end{rem}
\subsection{Local polynomial estimation}
\label{SLP}
We first introduce some notations related to local polynomial estimation.
Fix some \( k \) such that \( 0\leq k < L \) and suppose that
we want to estimate a regression function
\begin{eqnarray*}
    \theta_{k}(x):=\E[g(X(t_{k+1}))|X(t_{k})=x],\quad x\in \mathbb{R}^{d}
\end{eqnarray*}
with \( g: \mathbb{R}^{d}\to \mathbb{R} \). Consider \( M \) trajectories of the
process \( X \)
\begin{eqnarray*}
    (X^{(m)}(t_{0}),\ldots, X^{(m)}(t_{L})), \quad m=1,\ldots,M,
\end{eqnarray*}
all starting from \( x_{0} \), i.e.  \( X^{(1)}(t_{0})=\ldots=X^{(M)}(t_{0})=x_{0} \).
For some \( h>0 \), \( x\in \mathbb{R}^{d} \), an integer \( l\geq 0 \) and a function \( K:\mathbb{R}^{d}\to \mathbb{R}_{+} \), denote by \( q_{x,M} \) a polynomial on \( \mathbb{R}^{d} \) of degree \( l \) (maximal order of the
multi-index is less than or equal to \( l \)) which minimizes
\begin{equation}
\label{OF}
    \sum_{m=1}^{M}\left[Y^{(m)}(t_{k+1})-q_{x,M}(X^{(m)}(t_{k})-x)\right]^{2}K\left( \frac{X^{(m)}(t_{k})-x}{h}
    \right),
\end{equation}
where \( Y^{(m)}(t)=g(X^{(m)}(t)) \).
The local polynomial estimator \( \widehat \theta_{k,M}(x) \) of order \( l \) for the
value \( \theta_{k}(x) \)
of the regression function \( \theta_{k} \) at point \( x \) is defined as \( \widehat \theta_{k,M}(x)=q_{x,M}(0) \)
if \( q_{x,M} \) is the unique minimizer of \eqref{OF} and \( \widehat \theta_{k,M}(x)=0 \) otherwise.
The value \( h \) is called the bandwidth and the function \( K \) is called the
kernel of the local polynomial estimator.
\par
Let \( \pi_{u} \) denote the coefficients of \( q_{x,M} \) indexed by
the multi-index \( u\in \mathbb{N}^{d} \), \(q_{x,M}(z)=\sum_{|u|\leq l}\pi_{u}z^{u} \). Introduce the vectors
\( \Pi=(\pi_{u})_{|u|\leq l} \) and \( S=(S_{u})_{|u|\leq l} \) with
\begin{eqnarray*}
    S_{u}=\frac{1}{Mh^{d}}\sum_{m=1}^{M}Y^{(m)}(t_{k+1})\left(\frac{X^{(m)}(t_{k})-x}{h}\right)^{u}K\left( \frac{X^{(m)}(t_{k})-x}{h}
    \right).
\end{eqnarray*}
Let \( Z(z)=(z^{u})_{|u|\leq l}\) be the vector of all monomials of order less
than or equal to \( l \) and the matrix
\( \Gamma=(\Gamma_{u_{1},u_{2}})_{|u_{1}|,|u_{2}|\leq l} \) be defined as
\begin{equation}
\label{Gamma}
  \Gamma_{u_{1},u_{2}}=\frac{1}{Mh^{d}}\sum_{m=1}^{M}\left(\frac{X^{(m)}(t_{k})-x}{h}\right)^{u_{1}+u_{2}}K\left( \frac{X^{(m)}(t_{k})-x}{h} \right).
\end{equation}
The following result is straightforward.
\begin{prop}
\label{LPR}
If the matrix \( \Gamma \) is positive definite, then there exists a unique polynomial on \( \mathbb{R}^{d} \) of degree \( l \) minimizing \eqref{OF}. Its vector of coefficients is given by
\( \Pi=\Gamma^{-1}S \) and the corresponding local polynomial regression function estimator has the form
\begin{multline}
    \widehat \theta_{k,M}(x)=Z^{\top}(0)\Gamma^{-1} S
    \\
    =\frac{1}{Mh^{d}}\sum_{m=1}^{M}Y^{(m)}(t_{k+1})K\left( \frac{X^{(m)}(t_{k})-x}{h} \right)
    \\
    \label{LPE}
    \times Z^{\top}(0)\Gamma^{-1} Z\left(\frac{X^{(m)}(t_{k})-x}{h}\right).
\end{multline}
\end{prop}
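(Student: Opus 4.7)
The plan is straightforward: the assertion amounts to solving a weighted linear least-squares problem and identifying the minimizer through its normal equations. First, I would rewrite the objective \eqref{OF} in the rescaled variable $u^{(m)}:=(X^{(m)}(t_{k})-x)/h$. Setting $\tilde\pi_{u}:=h^{|u|}\pi_{u}$ one has $q_{x,M}(X^{(m)}(t_{k})-x)=Z^{\top}(u^{(m)})\tilde\Pi$, so the objective becomes
\begin{equation*}
F(\tilde\Pi)=\sum_{m=1}^{M}\bigl[Y^{(m)}(t_{k+1})-Z^{\top}(u^{(m)})\tilde\Pi\bigr]^{2}K(u^{(m)}),
\end{equation*}
a nonnegative quadratic form in $\tilde\Pi$ whose Hessian is exactly $2Mh^{d}\,\Gamma$. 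Positive definiteness of $\Gamma$ therefore makes $F$ strictly convex, which yields both existence and uniqueness of the minimizer.

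Second, I would compute $\nabla F(\tilde\Pi)=0$ directly. After dividing by $2Mh^{d}$ this reads $\Gamma\tilde\Pi=S$, the usual weighted normal equations, and inverting gives $\tilde\Pi=\Gamma^{-1}S$. Uniqueness already follows from the strict convexity noted above; alternatively one can observe that any other candidate minimizer would produce the same gradient equation and hence coincide with $\Gamma^{-1}S$ by invertibility.

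Third, to obtain the estimator I would use $\widehat\theta_{k,M}(x)=q_{x,M}(0)$, which only reads off the constant coefficient of $q_{x,M}$. Since the constant coefficient is unaffected by the $h$-rescaling ($\tilde\pi_{0}=\pi_{0}$), we get $\widehat\theta_{k,M}(x)=Z^{\top}(0)\tilde\Pi=Z^{\top}(0)\Gamma^{-1}S$; substituting the definition of $S$ and bringing the sum outside of the quadratic form then produces the explicit kernel sum \eqref{LPE}.

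There is no real obstacle here; the only care needed is the bandwidth book-keeping between the coefficients $\pi_{u}$ of $q_{x,M}$ in the raw increment $X^{(m)}(t_{k})-x$ and the components of $\Pi=\Gamma^{-1}S$ that naturally pair with the rescaled monomials $Z(u^{(m)})$. Because $Z^{\top}(0)=(1,0,\ldots,0)$ only extracts the constant term, the final formula for $\widehat\theta_{k,M}(x)$ is the same under either parameterization, so the statement as written is correct.
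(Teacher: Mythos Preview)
Your argument is correct and is exactly the standard weighted least-squares derivation the paper has in mind; the paper itself gives no proof beyond calling the result ``straightforward,'' so there is nothing further to compare. Your remark about the bandwidth rescaling is also apt: strictly speaking $\Gamma^{-1}S$ returns the coefficients $\tilde\pi_{u}=h^{|u|}\pi_{u}$ rather than the $\pi_{u}$ themselves, but since only the constant term is extracted via $Z^{\top}(0)$ this does not affect the stated formula for $\widehat\theta_{k,M}(x)$.
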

\begin{rem}
From the inspection of \eqref{LPE} it becomes  clear that any local polynomial
estimator can be represented as a weighted average of
 the ``observations'' \( Y^{(m)}, \, m=1,\ldots, M, \) with a special weights structure. Hence,
local polynomial estimators belong to the class of mesh estimators
introduced by \citet{BG} \cite[see also][Ch. 8]{Gl}.  Our results will show that
this particular type of mesh estimators has nice convergence properties in the class
of smooth continuation values.
\end{rem}
\subsection{Estimation algorithm for the continuation values}
\label{EstAlg}
According to the dynamic programming principle, the optimal continuation values \eqref{CV}
satisfy the following backward recursion
\begin{eqnarray*}
    C_{L}(x)&=&0,
    \\
    C_{k}(x)&=&\E[\max(f_{k+1}(X(t_{k+1})),C_{k+1}(X(t_{k+1})))|X(t_{k})=x], \quad x\in \mathbb{R}^{d}
\end{eqnarray*}
with \( k=1,\ldots,L-1 \).
Consider \( M \) paths of the process \( X \), all starting from \( x_{0}, \)
and  define estimates \( \widehat C_{1,M}, \ldots, \widehat C_{L,M} \) recursively in
the following way. First, we put \( \widehat C_{L,M}(x)\equiv 0 \).
Further, if an estimate of \( \widehat C_{k+1,M}(x) \) is already constructed we
define  \( \widehat C_{k,M}(x) \) as the local polynomial estimate of the function
\begin{equation}
\label{CTilde}
\widetilde C_{k,M}(x):=\E[\max(f_{k+1}(X(t_{k+1})),\widehat C_{k+1,M}(X(t_{k+1})))|X(t_{k})=x],
\end{equation}
based on  the sample
\[
 (X^{(m)}(t_{k}),\widehat C_{k+1,M}(X^{(m)}(t_{k+1}))),\quad m=1,\ldots,M.
\]
Note that all \( \widetilde C_{k,M} \) are \( \mathcal{F}^{\otimes M} \) measurable
random variables because the expectation in \eqref{CTilde} is taken with respect to
a new \( \sigma \)-algebra \( \mathcal{F} \) which is independent of \( \mathcal{F}^{\otimes M} \)
(one can start with the enlarged product \( \sigma \)-algebra \( \mathcal{F}^{\otimes (M+1)} \)
and take expectation in \eqref{CTilde} w.r.t. the first coordinate).
The main problem arising by the convergence analysis of the estimate \(  \widehat C_{k+1,M}  \)
is that all errors coming from the previous estimates \(  \widehat C_{j,M}, \, j\leq k \)
have to be taken into account. This problem has been already encountered by
\citet{CLP} who investigated the convergence of the Longstaff-Schwartz algorithm.

\subsection{Rates of convergence for \( V_{0}-\E\widehat V_{0} \)}
Let \( \beta>0 \). Denote by \( \lfloor \beta \rfloor \) the maximal integer that is strictly
less than \( \beta \). For any \( x\in \mathbb{R}^{d} \) and any \( \lfloor \beta \rfloor \)
times continuously differentiable real-valued function \( g \) on \( \mathbb{R}^{d} \), we denote
by \( g_{x} \) its Taylor polynomial of degree \( \lfloor \beta \rfloor \) at point \( x \)
\begin{eqnarray*}
    g_{x}(x')=\sum_{|s|\leq \lfloor \beta \rfloor}\frac{(x'-x)^{s}}{s!}D^{s}g(x),
\end{eqnarray*}
where \( s=(s_{1},\ldots,s_{d}) \) is a multi-index, \( |s|=s_{1}+\ldots+s_{d} \) and \( D^{s} \) denotes the differential
operator \( D^{s}=\frac{\partial^{s_{1}+\ldots+s_{d}}}{\partial x_{1}^{s_{1}}\cdot\ldots\cdot \partial x_{d}^{s_{d}}} \).
Let \( H>0 \). The class of \( (\beta,H,\mathbb{R}^{d}) \)-H\"older smooth functions, denoted
by \( \Sigma(\beta,H,\mathbb{R}^{d}) \), is defined as the set of functions \( g:\mathbb{R}^{d}\to \mathbb{R} \) that are \( \lfloor \beta \rfloor \) times continuously differentiable and satisfy,
for any \( x,x'\in \mathbb{R}^{d} \), the inequality
\begin{eqnarray*}
    |g(x')-g_{x}(x')|\leq H \| x-x' \|^{\beta}, \quad x'\in \mathbb{R}^{d}.
\end{eqnarray*}
Let us make two assumptions on the process \( X \)
\begin{description}
\item[(AX0)] There exists a bounded set \( \mathcal{A}\subset \mathbb{R}^{d} \) such that
\( \P(X(t_{0})\in \mathcal{A})=1 \) and \( \P_{s|t}(X(s)\in \mathcal{A})=1 \) for all \( t \) and \( s \) satisfying \( t_{0}\leq t\leq s\leq T. \)

\item[(AX1)] All transitional densities \( p(t_{k+1},y|t_{k},x), \, k=0,\ldots,L-1, \) of the process \( X \)
are uniformly bounded on \( \mathcal{A}\times \mathcal{A} \) and belong to the H\"older class \( \Sigma(\beta,H,\mathbb{R}^{d}) \) as functions
of \( x\in \mathcal{A} \), i.e. there exists \( \beta>1 \) with \( \beta-\lfloor \beta  \rfloor >0 \)
and a constant \( H \) such that the inequality
\begin{eqnarray*}
   |p(t_{k+1},y|t_{k},x')-p_{x}(t_{k+1},y|t_{k},x')|\leq H \| x-x' \|^{\beta}
\end{eqnarray*}
holds for all \( x,x',y\in \mathcal{A} \) and \( k=0,\ldots,L-1. \)
\end{description}
Consider a matrix valued
function \( \bar\Gamma(s,x)=(\Gamma_{u_{1},u_{2}})_{|u_{1}|,|u_{2}|
\leq \lfloor\beta\rfloor} \)
with elements
\[
 \bar\Gamma_{u_{1},u_{2}}(s,x):=\int_{\mathbb{R}^{d}}z^{u_{1}+u_{2}}K(z)p(s,x+hz|t_{0},x_{0})\, dz,
\]
for any \( s>t_{0}. \)
\begin{description}
  \item[(AX2)] We assume that the minimal eigenvalue of \(  \bar\Gamma \) satisfies
  \begin{eqnarray*}
   \label{EV_WGW}
   \min_{k=1,\ldots,L}\inf_{x\in \mathcal{A}}
   \min_{\| W \|=1}\left[ W^{\top}\bar\Gamma(t_{k},x) W \right]\geq\gamma_{0} h^{\nu}
  \end{eqnarray*}
  with some \( \nu\geq 0 \) and \( \gamma_{0}>0.  \)
\end{description}
Moreover, we shall assume that the kernel \( K \) fulfils the
following conditions
\begin{description}
  \item[(AK1)] \( K \) integrates to \( 1 \) on \( \mathbb{R}^{d} \) and
\begin{eqnarray*}
    \int_{\mathbb{R}^{d}}(1+\| u \|^{4\beta})K(u)\,du<\infty, \quad \sup_{u\in \mathbb{R}^{d}}(1+\| u \|^{2\beta})K(u)<\infty.
\end{eqnarray*}
\item[(AK2)] \( K \) is in the linear span
(the set of finite linear combinations) of functions \( k\geq 0 \) satisfying
the following property: the subgraph of \( k, \) \( \{ (s,u):\, k(s)\geq u \}, \)
can be represented as a finite number of Boolean operations among the sets of the form
\( \{ (s,u):\, p(s,u)\geq f(u) \} \), where  \( p \) is a polynomial on \( \mathbb{R}^{d}\times \mathbb{R} \)
and \( f \) is an arbitrary real function.
\end{description}

\paragraph{Discussion}
The assumption (AX0) may seem rather restrictive. In fact, as mentioned in
\citet{EKT},
one can always use a kind of ``killing'' procedure to localize process \( X \) to
a ball \( \mathcal{B}_{R} \) in \( \mathbb{R}^{d} \) around \( x_{0} \) of radius \( R \) . Indeed, one can replace process \( X(t) \) with
the process \( X^{\mathcal{K}}(t) \) killed at first exit time from  \( \mathcal{B}_{R} \).  This new process
\( X^{\mathcal{K}}(t) \) is again a Markov process and is connected to the original process \( X(t) \)
via the identity
\begin{eqnarray*}
    \E[g(X^{\mathcal{K}}(s))|X^{\mathcal{K}}(t)=x]=\E[g(X(s))M(s)|X(t)=x], \quad s>t,
\end{eqnarray*}
that holds for any integrable  \( g: \mathbb{R}^{d}\to \mathbb{R} \) with \( M(s)=\mathbf{1}(\tau_{R}>s) \) and \( \tau_{R}=\inf\{ t>0:\,X(t)\not \in \mathcal{B}_{R} \} \).
This implies that
\begin{multline}
\label{KI}
   \sup_{\tau\in \mathcal{T}_{0}}\left|\E^{\mathcal{F}_{t_{0}}}[f_{\tau}(X(t_\tau))]-\E^{\mathcal{F}_{t_{0}}}[f_{\tau}(X^{\mathcal{K}}(t_\tau))]\right|
   \\
   \leq \sup_{\tau\in \mathcal{T}_{0}}\left| \E^{\mathcal{F}_{t_{0}}}[f_{\tau}(X(t_\tau))\mathbf{1}(m_{\tau}>R)] \right|
\end{multline}
with \( m_{t}=\sup_{0\leq s \leq t}\| X(s)-x_{0} \| \). The r.h.s of \eqref{KI}
can be made arbitrary small by taking large values of \( R \) (the exact convergence
rates depend, of course, on the properties of the process \( X \)).
\par
Instead of ``killing'' the process \( X(t) \) upon leaving \( \mathcal{B}_{R} \) one can reflect it on the boundary of \( \mathcal{B}_{R}. \)
As can be seen a new reflected process \( X^{\mathcal{R}}(t) \) satisfies \eqref{KI} as well.
\paragraph{Example}
Let process \( X(t) \) be a \( d \)-dimensional diffusion process satisfying
\begin{eqnarray*}
    X(t)=x_{0}+\int_{t_{0}}^{t}\mu(X(t))\, dt+\int_{t_{0}}^{t}\sigma(X(t))\, dW(t), \quad t\geq t_{0}.
\end{eqnarray*}
Denote
by \( p^{\mathcal{K}}(s-t,y|x) \) the transition density of  the process \( X^{\mathcal{K}}\). Assume that a drift coefficient \( \mu  \) and a diffusion coefficient \( \sigma  \) are regular enough and \( \sigma  \) satisfies the so called uniform ellipticity condition on compacts, i. e. for each compact set \( K\subset  \mathbb{R}^{d}\)
\begin{description}
  \item[(AD1)]  \( \mu(\cdot)\in C_{b}^{k}(K) \) and \( \sigma(\cdot)\in C_{b}^{k}(K)  \) for some natural \( k>1, \)
  \item[(AD2)]  there is \( \sigma_{K}>0  \) such that for any \( \xi \in \mathbb{R}^{d} \)
  it holds
  \[
   \sum_{j,k=1}^{d}( \sigma(x)\sigma^{\top}(x))_{jk}\xi_{j}\xi_{k}  \geq \sigma_{K}\|\xi \|^{2}, \quad x\in K.
  \]
\end{description}
Then (see e.g. \citet{F}) for any fixed \( s>0 \),  \(p^{\mathcal{K}} (s,y|x) \) is a \( C^{k} ( \overline{\mathcal{B}}_{R}\times \overline{\mathcal{B}}_{R})  \) function in \( (x,y) \). Moreover, as shown in \citet{KS} (see also \citet{B}) under assumptions (AD1) and (AD2) there exist positive constants \( C_{i}, \, i=1,\ldots,4, \) such that
\begin{eqnarray*}
    C_{1}\psi_{\mathcal{K}}(s,x,y)s^{-d/2}e^{-C_{2}\| x-y \|^{2}/s}\leq p^{\mathcal{K}} (s,y|x)\leq C_{3}\psi_{\mathcal{K}}(s,x,y)s^{-d/2}e^{-C_{4}\| x-y \|^{2}/s}
\end{eqnarray*}
for all \( (s,x,y)\in (0,T]\times \mathcal{B}_{R}\times \mathcal{B}_{R}, \)
where
\[
\psi_{\mathcal{K}}(s,x,y):=\left(1\wedge \frac{(R-\| x -x_{0}\|)}{\sqrt{s}}\right)\left(1\wedge \frac{(R-\| y -x_{0}\|)}{\sqrt{s}}\right).
\]
Let us check now assumption (AX2) in the case when \( K(z)=\frac{\Gamma(1+d/2)}{\pi ^{d/2}}\mathbf{1}_{\{ \|z\|\leq1 \}} \). We have for any fixed \( s>t_{0} \) and \( W\in \mathbb{R}^{D} \) with \( D=d(d+1)\cdot\ldots\cdot(d+\lfloor \beta \rfloor-1)/\lfloor \beta \rfloor ! \)
\begin{eqnarray*}
  W^{\top}\bar\Gamma(s,x) W&=&\int_{\mathbb{R}^{d}}\left( \sum_{|\alpha|\leq \lfloor \beta \rfloor} W^{\alpha }z_{\alpha } \right)^{2} K(z)p^{\mathcal{K}}(s-t_{0},x+hz|x_{0})\, dz
  \\
  &\geq&
  B\int_{\mathcal{S}(x,R)}\left( \sum_{|\alpha |\leq \lfloor \beta \rfloor} W^{\alpha }z_{\alpha } \right)^{2}(R-\| x+hz-x_{0} \|)\, dz
\end{eqnarray*}
with some positive constant \( B \) depending on \( s-t_{0} \) and \( R \), and  \( \mathcal{S}(x,R):=\{ z: \| z \|\leq 1,\, \| x+hz-x_{0} \|\leq R \}. \)
Introduce
\begin{eqnarray*}
\widetilde {\mathcal{S}}(x,R):=\{ z: \| z \|\leq 1,\, \| x+hz -x_{0}\|\leq R-h/2 \}.
\end{eqnarray*}
Since \( \widetilde {\mathcal{S}}(x,R)\subset \mathcal{S}(x,R) \) we get
\begin{eqnarray*}
\int_{\mathcal{S}(x,R)}\left( \sum_{|\alpha|\leq \lfloor \beta \rfloor} W^{\alpha }z_{\alpha } \right)^{2}(R-\| x+hz-x_{0} \|)\, dz
\geq \frac{h}{2}\int_{\widetilde {\mathcal{S}}(x,R)}\left( \sum_{|\alpha |\leq \lfloor \beta \rfloor} W^{\alpha }z_{\alpha } \right)^{2}\, dz.
\end{eqnarray*}
Using now the fact that the Lebesgue  measure of the set \( \widetilde {\mathcal{S}}(x,R) \)  is larger than some positive number \( \lambda \) for all \( x\in \mathcal{B}_{R}, \) where \( \lambda  \) depends on \( R \) and \( d \) but does not depend on \( h, \) we get
\begin{eqnarray*}
       \min_{k=1,\ldots, L}\inf_{x\in \mathcal{B}_{R}}\left[ W^{\top}\bar\Gamma(t_{k},x) W \right]\geq \frac{Bh}{2}\inf_{\| W \|=1}\inf_{\mathcal{S}: |\mathcal{S}|>\lambda}\int_{\mathcal{S}}\left( \sum_{|\alpha |\leq \lfloor \beta \rfloor} W^{\alpha }z_{\alpha } \right)^{2}\, dz\geq \gamma_{0}h
\end{eqnarray*}
by the compactness argument. Thus, assumption (AX2) is fulfilled with \( \nu =1. \)
\par
Let us now reflect the diffusion process \( X(t) \) instead of ``killing'' it by defining
a reflected process \( X^{\mathcal{R}}(t) \) which satisfies  a reflected stochastic differential equation  in \( \mathcal{B}_{R} \),
with
oblique reflection at the boundary of \( \mathcal{B}_{R} \) in the conormal direction, i.e.
\begin{eqnarray*}
  X^{\mathcal{R}}(t)=x_{0}+\int_{t_{0}}^{t}\mu(X^{\mathcal{R}}(t))\, dt+\int_{t_{0}}^{t}\sigma(X^{\mathcal{R}}(t))\, dW(t)+\int_{t_{0}}^{t}\mathbf{n} (X^{\mathcal{R}}(t))\, dL(t), 
\end{eqnarray*}
where \( \mathbf{n} \) is the inward normal vector on the boundary of \( \mathcal{B}_{R} \) and \( L(t) \) is a local time process which increases only on  \( \{ \| x \|=R \}, \)  i.e.
\( L(t)=\int_{t_{0}}^{t}\mathbf{1}_{\{ \| X_{s} \|=R \}} \, dL(s).\)
Denote by  \( p^{\mathcal{R}} (s,y|x) \) a transition density of \(  X^{\mathcal{R}}(t) \).
It satisfies a parabolic  partial differential equation with Neumann boundary conditions.
Under (AD1) it belongs to \( C^{k} ( \overline{\mathcal{B}}_{R}\times \overline{\mathcal{B}}_{R})  \)
(see \citet{SU}) for any fixed \( s>0. \)
Moreover, using
a strong version
of the maximum principle \citep[see, e.g.][Theorem 1 in Chapter 2]{F}
one can show that under assumption (AD2) the transition density \( p^{\mathcal{R}} (s,y|x) \) is strictly positive on \( (0,T]\times \mathcal{B}_{R}\times \mathcal{B}_{R} \). Similar calculations as before show that in this case
\begin{eqnarray*}
    \min_{k=1,\ldots, L}\inf_{x\in \mathcal{B}_{R}}\left[ W^{\top}\bar\Gamma(t_{k},x) W \right]\geq\gamma_{0}>0
\end{eqnarray*}
and hence assumption (AX2) holds with \( \nu =0. \)
\begin{rem}
It can be shown that (AK2) is fulfilled if  \( K(x)=f(p(x)) \) for some
polynomial \( p \) and a bounded real function \( f \)  of bounded variation.
Obviously, the standard Gaussian kernel falls into this category. Another example
is the case where \( K \) is a pyramid or \( K=\mathbf{1}_{[-1,1]^{d}} \).
\end{rem}
In the sequel we will consider a truncated version of the
local polynomial estimator
\( \widehat C_{k,M}(x) \) which is defined as follows. If the smallest eigenvalue of the
matrix \( \Gamma \) defined in \eqref{Gamma} is greater than \( h^{\nu}(\log M)^{-1} \)
we set \( T[\widehat C_{k,M}](x) \) to be equal to the projection of \( \widehat C_{k,M}(x) \)
on the interval \( [0,C_{\max}] \) with \( C_{\max}=\max_{k=0,\ldots,L-1}\sup_{x\in \mathcal{A}}
C_{k}(x) \) (\( C_{\max} \) is finite due to (AX0) and (AX1)).  Otherwise, we put
\( T[\widehat C_{k,M}](x)=0 \).
The following propositions provide exponential bounds for  the truncated
estimator \( \{ T[\widehat C_{k,M}] \}\).
\begin{prop}
\label{ExpBoundsKern}
Let condition (AX0)-(AX2),(AK1) and (AK2) be satisfied  and let
\( \{ T[\widehat C_{k,M}] \} \)
be the continuation values estimates constructed as described in Section~\ref{EstAlg} using
truncated local polynomial estimators of degree \( \lfloor \beta \rfloor \). Then
there exist positive constants \( B_{1} \), \( B_{2} \) and \( B_{3} \) such that
for any \( h \) satisfying  \( B_{1}h^{\beta}<\sqrt{|\log h|/Mh^{d}} \) and any
\( \zeta\geq\zeta_{0} \) with some \( \zeta_{0}>0 \)
it holds
\begin{eqnarray*}
    &&\P_{x_{0}}^{\otimes M}\left(\sup_{x\in \mathcal{A}}
    |T[\widehat C_{k,M}](x)- C_{k}(x)|\geq   \zeta\sqrt{\frac{|\log h|}{Mh^{d+2\nu }}}\right)
    \leq B_{2}\exp(-B_{3}\zeta)
\end{eqnarray*}
for  \( k=1,\ldots,L-1 \).
As a consequence, we get with \( h=M^{-1/(2(\beta+\nu )+d)} \)
and any \( \zeta>\zeta_{0}>0 \)
\begin{eqnarray*}
   \P_{x_{0}}^{\otimes M}\left(\sup_{x\in \mathcal{A}}
   |T[\widehat C_{k,M}](x)- C_{k}(x)|\geq
   \frac{\zeta\log^{1/2} M}{M^{\beta/(2(\beta+\nu )+d)}}\right)\leq B_{2}\exp(-B_{3}\zeta).
\end{eqnarray*}
\end{prop}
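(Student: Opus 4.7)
The plan is backward induction on $k$, with base case $\widehat C_{L,M}\equiv 0 = C_L$. Assume the stated exponential tail bound holds for $T[\widehat C_{k+1,M}]$. For the inductive step I decompose
\begin{equation*}
 T[\widehat C_{k,M}](x) - C_k(x)
 = \bigl(T[\widehat C_{k,M}](x) - \widetilde C_{k,M}(x)\bigr)
 + \bigl(\widetilde C_{k,M}(x) - C_k(x)\bigr).
\end{equation*}
The second summand is bounded uniformly in $x$ by $\|T[\widehat C_{k+1,M}] - C_{k+1}\|_{\infty,\mathcal{A}}$ thanks to the $1$-Lipschitz property of $a\mapsto\max(f,a)$ and the tower property; by the inductive hypothesis this is already exponentially small. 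For the first summand I write the pseudo-responses as $Y^{(m)} = \max(f_{k+1},C_{k+1})(X^{(m)}(t_{k+1})) + \eta^{(m)}$ with $|\eta^{(m)}|\le \|T[\widehat C_{k+1,M}] - C_{k+1}\|_{\infty,\mathcal{A}}$, so that, by linearity, the local polynomial estimator $\widehat\theta_{k,M}(x)=Z^\top(0)\Gamma^{-1}(x)S(x)$ decomposes into an ``ideal'' regression of the clean targets $\max(f_{k+1},C_{k+1})(X^{(m)}(t_{k+1}))$ (whose conditional mean given $X^{(m)}(t_k)$ is $C_k(X^{(m)}(t_k))$) plus a remainder driven by the $\eta^{(m)}$ whose contribution is again controlled by the already exponentially small factor $\|T[\widehat C_{k+1,M}] - C_{k+1}\|_{\infty}$ multiplied by a bounded stochastic kernel sum.

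The ideal regression estimator is then treated by the standard local-polynomial bias--variance split. Assumption (AX1) transfers, via integration of the transition density, a Hölder-$\beta$ regularity to $C_k$ on $\mathcal{A}$. A Taylor expansion of degree $\lfloor\beta\rfloor$ of $C_k$ around $x$ together with the reproduction property of local polynomials yields a pointwise bias of order $h^{\beta}\cdot\|\Gamma^{-1}(x)\|_{\mathrm{op}}$; the truncation condition on $\Gamma$ in the definition of $T[\widehat C_{k,M}]$ ensures $\|\Gamma^{-1}(x)\|_{\mathrm{op}}\le h^{-\nu}\log M$, hence a bias of order $h^{\beta-\nu}\log M$, which is dominated by the variance once $B_1 h^{\beta}<\sqrt{|\log h|/Mh^{d}}$ holds. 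For the stochastic part I need a uniform exponential bound on
\begin{equation*}
 \frac{1}{Mh^{d}}\sum_{m=1}^{M}\varepsilon^{(m)}\,Z^{\top}(0)\Gamma^{-1}(x)\,
 Z\!\Bigl(\tfrac{X^{(m)}(t_{k})-x}{h}\Bigr)K\!\Bigl(\tfrac{X^{(m)}(t_{k})-x}{h}\Bigr),
\end{equation*}
with bounded centred residuals $\varepsilon^{(m)}$. Assumption (AK2) guarantees that the class of index functions $x\mapsto Z\!\bigl(\tfrac{\cdot-x}{h}\bigr)K\!\bigl(\tfrac{\cdot-x}{h}\bigr)$ has a VC-subgraph dimension uniformly bounded in $h$, so Talagrand's inequality on an $\varepsilon$-net of $\mathcal{A}$ delivers a uniform-in-$x$ sub-exponential tail of the stated rate $\sqrt{|\log h|/(Mh^{d+2\nu})}$.

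The main obstacle is the uniform-in-$x$ control of the inverse random matrix $\Gamma(x)^{-1}$, since even stating the bias bound presupposes invertibility with a controlled norm. This is handled in two regimes. On the event where the smallest eigenvalue of $\Gamma(x)$ exceeds $h^{\nu}/\log M$ for all $x\in\mathcal{A}$, the above deterministic bias/variance estimates apply; this event is shown to have exponentially high probability by combining the deterministic lower bound $\gamma_{0}h^{\nu}$ on $\bar\Gamma(t_{k},x)$ from (AX2) with a VC-type uniform concentration of $\Gamma(x)-\bar\Gamma(t_{k},x)$ obtained, again, from (AK2) and Talagrand's inequality applied entrywise. On the complementary event the truncation sets $T[\widehat C_{k,M}](x)=0$, so the contribution is bounded by $C_{\max}$ times the exponentially small probability of the bad event, which is absorbed into the final exponential remainder. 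Combining the one-step bias, the one-step variance, the truncation-event contribution and the inductively propagated error (with a union bound over the $L$ layers to aggregate the failure probabilities) gives the first claim; the second follows by substituting $h=M^{-1/(2(\beta+\nu)+d)}$.
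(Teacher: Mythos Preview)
Your overall architecture---backward induction, the two-term split into a one-step regression error and a propagated error, the uniform control of $\lambda_{\Gamma}(x)$ via (AX2) plus VC concentration, and the bias--variance treatment via Talagrand---matches the paper. But there is a genuine gap in the step where you bound the ``remainder driven by the $\eta^{(m)}$''.

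You claim the local polynomial smoother applied to the perturbations $\eta^{(m)}$ is controlled by $\|T[\widehat C_{k+1,M}]-C_{k+1}\|_{\infty}$ times a \emph{bounded} kernel sum. This is not true in general: the LP weights $w_m(x)=\tfrac{1}{Mh^d}K(\cdot)\,Z^{\top}(0)\Gamma^{-1}(x)Z(\cdot)$ satisfy $\sum_m w_m(x)=1$ but may be negative, and $\sum_m|w_m(x)|$ is only bounded by a constant times $\|\Gamma^{-1}(x)\|_{\mathrm{op}}$, which on the good event is of order $h^{-\nu}$ (or $h^{-\nu}\log M$ if you only use the truncation threshold). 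Hence your remainder is of order $h^{-\nu}\,\|T[\widehat C_{k+1,M}]-C_{k+1}\|_{\infty}$, not $O(1)\cdot\|T[\widehat C_{k+1,M}]-C_{k+1}\|_{\infty}$. Feeding this into the recursion makes the error propagate \emph{multiplicatively}, picking up a factor $h^{-\nu}$ at each of the $L-k$ steps; with $\nu>0$ the stated rate $\sqrt{|\log h|/(Mh^{d+2\nu})}$ is destroyed.

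The paper sidesteps this by \emph{not} decomposing the pseudo-responses into ``clean target plus $\eta$''. Instead it bounds the one-step error $\zeta_{k,M}(x)=T[\widehat C_{k,M}](x)-\widetilde C_{k,M}(x)$ directly, exploiting the key observation that the pseudo-target $\widetilde C_{k,M}(x)=\int \max(f_{k+1},T[\widehat C_{k+1,M}])(y)\,p(t_{k+1},y\,|\,t_k,x)\,dy$ is itself in the H\"older class $\Sigma(\beta,H',\mathcal{A})$ with a \emph{deterministic} constant $H'$, because the smoothness comes from the transition density (assumption (AX1)) and not from the integrand. One then gets a one-step bound $\sup_x|\zeta_{k,M}(x)|\le \delta\sqrt{|\log h|/Mh^{d+2\nu}}$ with exponential tail (this is the content of the paper's Lemma~5.2), valid uniformly in the realization of $T[\widehat C_{k+1,M}]$. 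The telescoping $|\varepsilon_{k,M}|\le |\zeta_{k,M}|+\E[|\varepsilon_{k+1,M}|\,|\,X(t_k)=\cdot]$ then gives $|\varepsilon_{k,M}|\le\sum_{l}\E[|\zeta_{l,M}|\,|\,\cdot]$, an \emph{additive} accumulation producing only a factor $L$ in the constants. Replacing your ``ideal regression $+$ remainder'' step with this direct one-step bound fixes the argument.
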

\begin{prop}
Let condition (AX0)-(AX2),(AK1) and (AK2) be satisfied, then
for any \( \delta>0 \) there exist positive
constants \( B_{4} \) and \( B_{5} \) such that
\label{ExpBoundKernF}
\begin{eqnarray*}
   \P_{x_{0}}^{\otimes M}\left(\sup_{x\in \mathcal{A}}
    |T[\widehat C_{k,M}](x)- C_{k}(x)|\geq  \delta \right)
    \leq B_{4}\exp(-B_{5}M)
\end{eqnarray*}
for \( k=1,\ldots,L-1. \)
\end{prop}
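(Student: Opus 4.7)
The plan is to exploit the fact that $\delta$ is now fixed (rather than shrinking with $M$), which allows us to freeze the bandwidth at a value $h_{0}=h_{0}(\delta)$ independent of $M$ and then extract an honest $\exp(-cM)$ decay from standard concentration inequalities for bounded empirical processes. More precisely, write
\[
T[\widehat C_{k,M}](x)-C_{k}(x) = \bigl(T[\widehat C_{k,M}](x)-\bar C_{k,M}(x)\bigr)+\bigl(\bar C_{k,M}(x)-C_{k}(x)\bigr),
\]
where $\bar C_{k,M}(x)$ denotes the conditional expectation of the local polynomial estimator given the previous‐level estimate $\widehat C_{k+1,M}$ (so that $\bar C_{k,M}$ contains the deterministic bias at level~$k$ plus the propagated error from level $k+1$). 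The first term is the genuine stochastic fluctuation at level $k$, the second is the bias.

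For the bias term I would re‐run the argument that underlies Proposition~\ref{ExpBoundsKern}: under (AX1)–(AX2) and (AK1), the local polynomial estimator reproduces polynomials of degree $\lfloor\beta\rfloor$, and the H\"older smoothness of the transition density together with the eigenvalue bound on $\bar\Gamma$ yields
\[
\sup_{x\in\mathcal{A}}\bigl|\bar C_{k,M}(x)-C_{k}(x)\bigr|\leq K_{1}h^{\beta}+K_{2}\sup_{x\in\mathcal{A}}\bigl|T[\widehat C_{k+1,M}](x)-C_{k+1}(x)\bigr|,
\]
with constants $K_{1},K_{2}$ independent of $h$ and $M$. Choosing $h_{0}$ so small that $K_{1}h_{0}^{\beta}<\delta/(2(2K_{2})^{L})$ and iterating the inequality backwards from $k=L-1$ will guarantee that, conditionally on the previous level’s concentration event, the bias at level $k$ is below $\delta/2$.

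For the stochastic term at a fixed $h_{0}$, the estimator $\widehat C_{k,M}(x)$ is, up to the event that the empirical matrix $\Gamma$ is non‐degenerate (which by Proposition~\ref{ExpBoundsKern}‐type arguments fails with probability at most $B\exp(-cM)$ when $h$ is fixed), a weighted average of $M$ i.i.d.\ bounded random variables: the responses $\max(f_{k+1},\widehat C_{k+1,M})$ are uniformly bounded by $\max(\|f\|_{\infty},C_{\max})$ because of the truncation and assumption (AX0), and the weights involve $K$ together with the entries of $\Gamma^{-1}$, both of which are bounded on the favourable event for fixed $h_{0}$. A direct application of Bernstein’s inequality to the deviation $\widehat C_{k,M}(x)-\bar C_{k,M}(x)$ at a single point $x$ therefore yields $2\exp(-cM\delta^{2})$ for $c=c(h_{0},\delta)$. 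The uniformity in $x\in\mathcal{A}$ is obtained from assumption (AK2): the class of functions $\{K((\cdot-x)/h_{0})Z(\cdot)Z^\top(0)\Gamma^{-1}\}_{x\in\mathcal{A}}$ is built out of finitely many VC‐subgraph classes and so has polynomial (in $M$) uniform covering numbers, and a standard Talagrand/Pollard‐type inequality upgrades the pointwise Bernstein bound to a uniform one at the cost of absorbing polynomial pre‐factors into the exponent.

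Putting these pieces together and carrying out a backwards induction on $k=L-1,\ldots,1$, using a union bound over the $L$ levels (finite) and over the ``$\Gamma$ non‐degenerate'' events, the desired estimate $B_{4}\exp(-B_{5}M)$ follows with $B_{5}$ proportional to $\delta^{2}$. The main obstacle, as in the analysis of Proposition~\ref{ExpBoundsKern} carried out by \citet{CLP} for Longstaff–Schwartz, is propagating the stochastic errors through the dynamic programming recursion without losing the exponential rate; this is handled cleanly here because $L$ is finite, the $\max$ operation is $1$‐Lipschitz, and the truncation keeps all regressors bounded by $C_{\max}$ throughout the induction.
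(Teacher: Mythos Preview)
Your proposal is correct and follows essentially the same route as the paper: freeze $h=h_{0}(\delta)$ so that the local-polynomial bias $O(h_{0}^{\beta})$ is below a fraction of $\delta$, control the one-step stochastic error $|\widehat C_{k,M}-\widetilde C_{k,M}|$ uniformly via the VC-class concentration of Proposition~\ref{EIP} (this is exactly the content of Lemma~\ref{2PBounds}, which the paper invokes in one line), and propagate through the dynamic-programming recursion using the $1$-Lipschitz property of $\max$ together with the truncation---your $\bar C_{k,M}$ plays the role of the paper's $\widetilde C_{k,M}$ up to the $O(h_{0}^{\beta})$ bias you track separately. The only imprecision is the closing remark that $B_{5}\propto\delta^{2}$: since $c=c(h_{0})$ and $h_{0}\sim\delta^{1/\beta}$, and since the $\Gamma$-nondegeneracy event already contributes $\exp(-cMh_{0}^{d+\nu})$, the actual order is $B_{5}\sim\delta^{(d+\nu)/\beta}$, as recorded in the remark following Proposition~\ref{ExpBoundKernF}.
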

\begin{rem}
As can be seen from the proof of  Proposition~\ref{ExpBoundsKern} and  Remark~\ref{DepVCD} (note that \( \omega  \) in \eqref{VC} grows linearly in \( d \) ) the constant \( B_{3} \)   decreases with
the dimension \( d \) as fast as  \( 1/d. \) The constant \( B_{5} \) is of order \( \delta_{0}^{(d+\nu)/\beta}/d.  \)
\end{rem}
Combining Proposition~\ref{CR} with Proposition~\ref{ExpBoundsKern} and Proposition~\ref{ExpBoundKernF} leads to the following
\begin{thm}
\label{MainR}
Let conditions (AX0)-(AX2), (AK1) and (AK2) be satisfied. Define
\begin{eqnarray*}
 V_{0,M}:=\E(f_{\widehat\tau_{M}}(X(t_{\widehat\tau_{M}}))|X(t_{0})=x_{0}),
\end{eqnarray*}
with
\begin{eqnarray*}
    \widehat\tau_{M}:=\min\{0\leq k \leq L: T[\widehat C_{k,M}](X(t_{k}))\leq
    f_{k}(X(t_{k}))\},
\end{eqnarray*}
where \( \{ T[\widehat C_{k,M}] \} \)
are   continuation values estimates constructed  using
truncated local polynomial estimators of degree \( \lfloor \beta \rfloor \).
If the boundary condition \eqref{BA} is fulfilled for some \( \alpha>0 \), then
\begin{eqnarray*}
    0\leq V_{0}-\E_{\P_{x_{0}}^{\otimes M}}[V_{0,M}]\leq D_{1}
    M^{-\beta(1+\alpha)/(2(\beta+\nu )+d)}\log^{(1+\alpha)/2}(M),
\end{eqnarray*}
with some constant \( D_{1} \). On the other hand, if the condition \eqref{BAF} is satisfied with
some \( \delta_{0}>0 \), then the bias of \( \widehat V_{0,M} \) decreases
exponentially in \( M \),
i.e. there exist positive constants \( D_{2}\) and \( D_{3} \),  such that
\begin{eqnarray*}
    0\leq V_{0}-\E_{\P_{x_{0}}^{\otimes M}}[V_{0,M}]\leq D_{2} \exp(-D_{3}M).
\end{eqnarray*}
\end{thm}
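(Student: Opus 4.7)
The plan is to combine the propagation result of Proposition~\ref{CR} (resp.\ Proposition~\ref{CRF}) with the exponential uniform bounds supplied by Proposition~\ref{ExpBoundsKern} (resp.\ Proposition~\ref{ExpBoundKernF}). The pointwise deviation bound required by the hypotheses of Propositions~\ref{CR} and \ref{CRF} is a trivial consequence of the supremum-over-$\mathcal{A}$ bounds, since by (AX0) the conditional law $\P_{t_k|t_0}$ is supported in $\mathcal{A}$, so the event $\{|T[\widehat C_{k,M}](x) - C_k(x)| \ge \delta\}$ is contained in the event $\{\sup_{x\in\mathcal{A}}|T[\widehat C_{k,M}](x) - C_k(x)| \ge \delta\}$ for almost every $x$ w.r.t.\ $\P_{t_k|t_0}$.

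For the polynomial case, I would choose $h = M^{-1/(2(\beta+\nu)+d)}$ (one checks $B_1 h^\beta < \sqrt{|\log h|/Mh^d}$ holds for $M$ large). Then Proposition~\ref{ExpBoundsKern} gives, for $\zeta\ge\zeta_0$,
\begin{eqnarray*}
\P_{x_0}^{\otimes M}\Bigl(|T[\widehat C_{k,M}](x)-C_k(x)| \ge \zeta\,\gamma_M^{-1/2}\Bigr) \le B_2 \exp(-B_3 \zeta),
\end{eqnarray*}
with $\gamma_M := M^{2\beta/(2(\beta+\nu)+d)}\,\log^{-1}(M)$. This is precisely hypothesis \eqref{ExpB} of Proposition~\ref{CR} with $\delta_0 = \zeta_0$. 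Apply that proposition: the bias is bounded by $B\bigl(\sum_l B_{0,l}\bigr)\gamma_M^{-(1+\alpha)/2}$, and substituting the value of $\gamma_M$ gives the announced rate $D_1 M^{-\beta(1+\alpha)/(2(\beta+\nu)+d)}\log^{(1+\alpha)/2}(M)$.

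For the exponential case, given any $\delta>0$ (we pick $\delta = \delta_0/2$ say), Proposition~\ref{ExpBoundKernF} yields constants $B_4,B_5$ with
\begin{eqnarray*}
\P_{x_0}^{\otimes M}\Bigl(|T[\widehat C_{k,M}](x)-C_k(x)| \ge \delta\Bigr) \le B_4 \exp(-B_5 M),
\end{eqnarray*}
which is hypothesis \eqref{ExpBF} of Proposition~\ref{CRF} with $\gamma_M = M$. The moment bound on $\max_k f_k^2(X(t_k))$ is automatic under (AX0) together with the standing assumption that $f_k$ is continuous on the bounded set $\mathcal{A}$ (so $B_f = \max_k \sup_{x\in\mathcal{A}} f_k^2(x) < \infty$). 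Invoking Proposition~\ref{CRF} then gives an exponential bound $D_2\exp(-D_3 M)$ on the bias.

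The non-negativity $V_0 - \E[V_{0,M}] \ge 0$ in both cases is immediate from the definition \eqref{LowerBound}, since $\widehat\tau_M$ is just one particular element of $\mathcal{T}_0$ and so its expected payoff cannot exceed the supremum $V_0$. No real obstacle arises in the proof itself: the whole content has been packaged into the three preceding propositions, and the theorem is simply the matching of constants $\gamma_M$ between the output of the risk-bound propositions on continuation values and the input required by the sub-optimal-stopping bias bound. The only mildly delicate point is verifying that the \emph{pointwise-in-$x$} exponential inequalities demanded by Propositions~\ref{CR} and \ref{CRF} are indeed implied by the \emph{uniform-in-$x$} bounds of Section~\ref{SLP}, which, as noted above, follows from (AX0) because $\P_{t_k|t_0}$ puts no mass outside $\mathcal{A}$.
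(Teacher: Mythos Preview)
Your proposal is correct and follows exactly the approach the paper takes: the paper itself does not give a separate proof of Theorem~\ref{MainR} but simply states that it is obtained by ``combining Proposition~\ref{CR} with Proposition~\ref{ExpBoundsKern} and Proposition~\ref{ExpBoundKernF}'', which is precisely what you do (adding Proposition~\ref{CRF} for the second part). You have in fact supplied more detail than the paper, in particular the passage from uniform to pointwise bounds via (AX0) and the verification of the moment hypothesis on $f_k$.
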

\paragraph{Discussion}
As we can see, the rates of convergence for \( \{ \widehat C_{k,M} \} \) are of
order
\[
 M^{-\beta/(2(\beta+\nu )+d)}\log^{1/2} M
\]
which can be proved  to be optimal under assumption (AX2), up to
a logarithmic factor, for the class of H\"older smooth
continuation values \( \{ C_{k}(x) \} \). On the other hand, the rates of convergence
for \( \E_{\P_{x_{0}}^{\otimes M}}[V_{0,M}] \) are of order
\[
 M^{-\beta(1+\alpha)/(2(\beta+\nu )+d)}\log^{(1+\alpha)/2} (M)
\]
and are always faster than ones of \( \{ \widehat C_{k,M} \} \) provided that
\( \alpha>0 \). The most interesting behavior of the lower bound \( \widehat V_{0,M} \)
can be observed if the condition
\eqref{BAF} is fulfilled. In this case the bias of
\( \widehat V_{0,M} \) becomes as small as \( \exp(-D_{3}M) \).  This means
that even
in the class of continuation values with an arbitrary low (but positive) H\"older smoothness
(e.g. in the class of non-differentiable continuation values)
and therefore with an arbitrary slow convergence rates of the estimates
\( \{ \widehat C_{k,M} \},\) the bias of the lower bound \( \widehat V_{0,M} \)
converges exponentially fast to zero.
\section{Numerical example: Bermudan max call}

This is a benchmark example studied in \citet{BG} and \citet{Gl}
among others. Specifically, the model with $d$ identically distributed
assets is considered, where each underlying has dividend yield $\delta $.
The risk-neutral dynamic of assets is given by
\begin{equation*}
\frac{dX_{k}(t)}{X_{k}(t)}=(r-\delta )dt+\sigma dW_{k}(t),\quad k=1,...,d,
\end{equation*}%
where $W_{k}(t),\,k=1,...,d$, are independent one-dimensional Brownian
motions and $r,\delta ,\sigma $ are constants. At any time $t\in
\{t_{0},...,t_{L}\}$ the holder of the option may exercise it and
receive the payoff
\begin{equation*}
f(X(t))=(\max (X_{1}(t),...,X_{d}(t))-\kappa)^{+}.
\end{equation*}%
We take \( d=2 \), \( r=5\% \), \( \delta=10\% \), \( \sigma=0.2 \), \( \kappa=100 \) and $t_{i}=iT/L,\,i=0,...,L$, with $T=3,\,L
=9$ as in \citet[Chapter 8]{Gl}. First, we estimate all continuation values using the dynamic programming
algorithm and the so called  Nadaraya-Watson regression estimator
\begin{equation}
\label{KE}\widehat C_{k,M}(x)=\frac{\sum_{m=1}^{M}K((x-X^{(m)}(t_{k})%
)/h)Y_{k+1}^{(m)}}%
{\sum_{m=1}^{M}K((x-X^{(m)}(t_{k}))/h)}
\end{equation}
with \( Y_{k+1}^{(m)}=\max(f_{k+1}(X^{(m)}(t_{k+1})),
e^{-rT/L}\widehat C_{k+1,M}(X^{(m)}(t_{k+1}))), \) \( k=0,\ldots,L-1. \)
Here \( K \) is a kernel, \( h>0 \) is a bandwidth and
\( (X^{(m)}(t_{1}),\ldots,X^{(m)}(t_{L})), \) \( m=1,\ldots,M, \) is a set of paths
of the process \( X \), all starting from the point \( x_{0}=(90,90) \) at \( t_{0}=0 \). As can be easily seen the
estimator \eqref{KE} is a
local polynomial estimator of degree \( 0 \).
Upon estimating \( \widehat C_{1,M} \), we define a first estimate for
the price of the option at time \( t_{0}=0 \)
as
\[
\widetilde V_{0}:=\frac{1}{M}\sum_{m=1}^{M}Y_{1}^{(m)}.
\]
Next, using the previously constructed estimates of continuation values,
we pathwise compute a stopping
policy \( \widehat \tau \)   via
\begin{eqnarray*}
\widehat\tau^{(n)}:=\min\left\{1\leq k \leq L: \widehat C_{k,M}(\widetilde X^{(n)}(t_{k}))\leq
    f_{k}(\widetilde X^{(n)}(t_{k}))\right\}, \quad n=1,\ldots,N,
\end{eqnarray*}
where  \( (\widetilde X^{(n)}(t_{1}),\ldots, \widetilde X^{(n)}(t_{L})), \)
\( n=1,\ldots,N, \) is a new independent set of trajectories of the process \( X \), all starting
from \( x_{0}=(90,90) \) at \( t_{0}=0 \).
The stopping policy \( \widehat \tau \) yields a lower bound
\begin{eqnarray*}
\widehat V_{0}=\frac{1}{N}\sum_{n=1}^{N}e^{-rt_{\widehat\tau^{(n)}}}f_{\widehat\tau^{(n)}}
(\widetilde X^{(n)}(t_{\widehat\tau^{(n)}})).
\end{eqnarray*}
\begin{figure}[pth]
\centering \includegraphics[width=12cm]{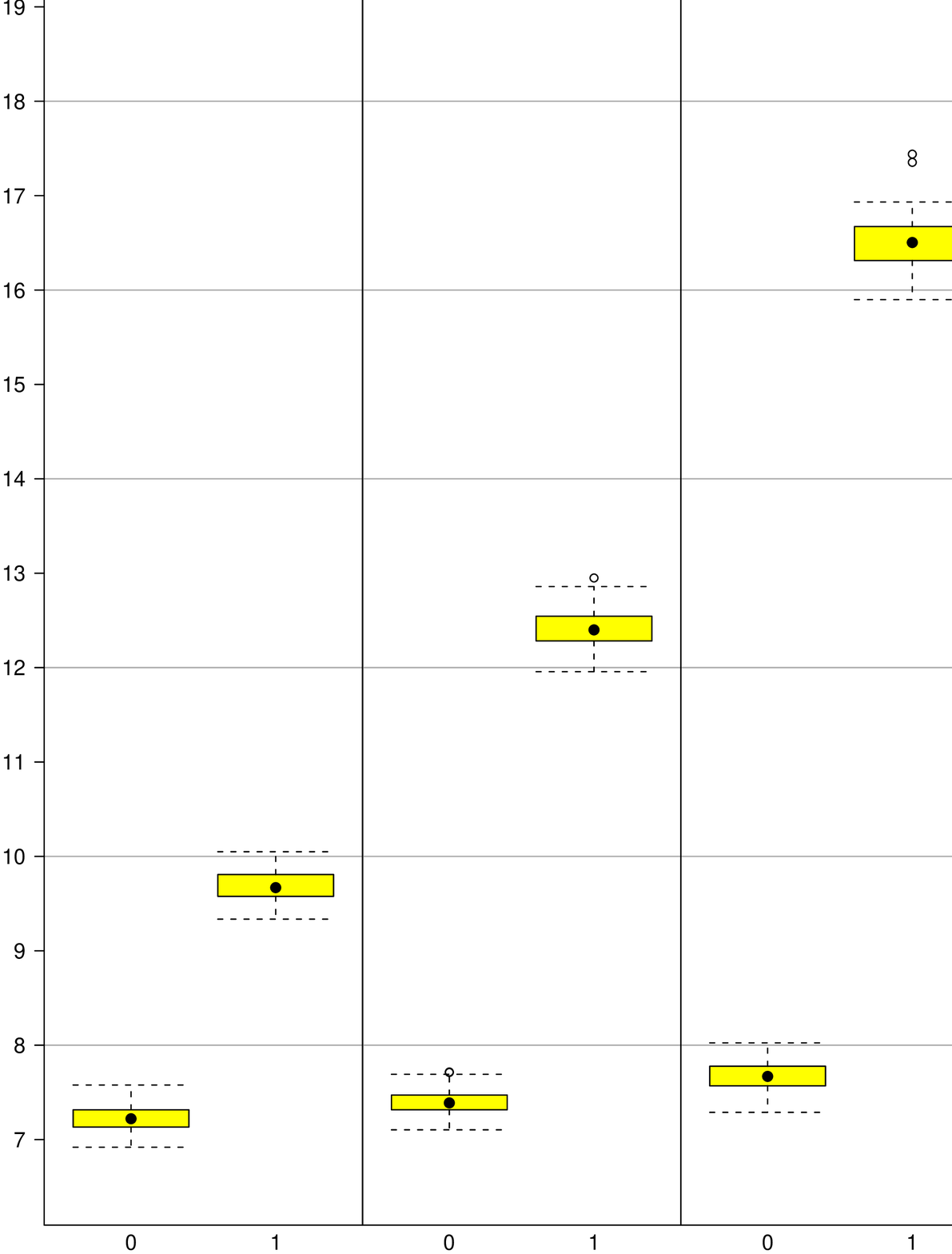}
\caption{ Boxplots of the estimates \( \widehat V_{0} \) (\( 0 \)) and \( \widetilde V_{0} \) (\( 1 \))
for different values of the bandwidth \( h \). }%
\label{Comparison_LBTR}%
\end{figure}
In Figure~\ref{Comparison_LBTR} we show the boxplots of \( \widetilde V_{0} \) and \( \widehat V_{0} \)
based on \( 100 \) sets of trajectories each of the size \( M=4000 \) (\( N=4000 \)) for different
values of the bandwidth \( h \), where
the triangle kernel \( K(x)=(1-\| x \|^{2})^{+} \) is used to construct \eqref{KE}.
The true value \( V_{0} \) of the option (computed
using a two-dimensional binomial lattice) is \( 8.08 \) in this case.
Several observations can be made by an examination of Figure~\ref{Comparison_LBTR}.
First, while the bias of \( \widehat V_{0} \) is always smaller then the bias
of \( \widetilde V_{0} \), the largest difference takes place for large \( h \).
This can be explained by the fact that for large \( h \) more observations
\( Y^{(m)}_{r+1} \) with \( X^{(m)}(t_{r}) \) lying far away from the given point
\( x \) become involved in the construction of \( \widehat C_{r,M}(x) \). This has
a consequence
of increasing the bias of the estimate \eqref{KE} and \( \widetilde V_{0} \) quickly deteriorates with increasing \( h \) . The most
interesting phenomenon is, however, the behavior
of \( \widehat V_{0} \) which turns out to be quite stable with respect to \( h \).
So, in the case of rather poor estimates of
continuation values (when \( h \) is increases) \( \widehat V_{0} \)  looks very
reasonable and even becomes closer to the true price.
\par
We stress that the aim of this example
is not to show the strength of the local polynomial estimation algorithms (although the performance
 of \( \widehat V_{0} \) for \( h=120 \) is quite comparable to the performance of a linear regression algorithm
 reported in \citet{Gl}) but rather to illustrate
the main message of this paper, namely the message about the efficiency of \( \widehat V_{0} \)
as compared to the estimates based on the direct use of continuation
values estimates.
\section{Conclusion}
In this paper we derive optimal rates of convergence for  low biased
estimates for the price of a Bermudan option based on suboptimal exercise policies
obtained from some estimates of
the optimal continuation values. We have shown that these rates are usually
much faster than the convergence rates
of the corresponding continuation values estimates. This may explain the
efficiency of these lower bounds observed in practice. Moreover, it turns out
that there are some cases where
the expected values of the lower bounds based on suboptimal stopping rules
achieve very fast convergence rates which are exponential in the number of paths
used to estimate the corresponding continuation values.

\section{Proofs}
\subsection{Proof of Proposition~\ref{CR}}
Define
\begin{eqnarray*}
\tau_{j}&:=&\min\{j\leq k < L: C_{k}(X(t_{k}))\leq f_{k}(X(t_{k}))\},\quad j=0,\ldots,L,
\\
\widehat\tau_{j,M}&:=&\min\{j\leq k < L: \widehat C_{k}(X(t_{k}))\leq f_{k}(X(t_{k}))\},\quad j=0,\ldots,L
\end{eqnarray*}%
and
\begin{eqnarray*}
V_{k,M}(x):=\E[f_{\widehat\tau_{k,M}}(X(t_{\widehat\tau_{k,M}}))|X(t_{k})=x], \quad x\in \mathbb{R}^{d}.
\end{eqnarray*}
The so called  \textit{Snell envelope} process \( V_{k} \) is related to \( \tau_{k} \) via
\begin{eqnarray*}
V_{k}(x)=\E[f_{\tau_{k}}(X(t_{\tau_{k}}))|X(t_{k})=x], \quad x\in \mathbb{R}^{d}.
\end{eqnarray*}
The following lemma provides a useful inequality which will be repeatedly
used in our analysis.
\begin{lem}
\label{BI}
For any \( k=0,\ldots,L-1 \), it holds with probability one
\begin{multline}
\label{BasicIneq}
0\leq V_{k}(X(t_{k}))-V_{k,M}(X(t_{k}))
\\
\leq \E^{\mathcal{F}_{t_{k}}}
\left[ \sum_{l=k}^{L-1}|f_{l}(X(t_{l}))-C_{l}(X(t_{l}))|
\right.
\\
\left.\times\left(\mathbf{1}_{\{\widehat\tau_{l,M}>l,\,\tau_{l}=l\}}
+\mathbf{1}_{\{\widehat\tau_{l,M}=l,\,\tau_{l}>l\}}\right)
\right].
\end{multline}

\end{lem}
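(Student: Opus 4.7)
\subsection*{Proof plan for Lemma~\ref{BI}}

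The plan is to proceed by backward induction on $k$, starting from $k=L-1$ and working down to $k=0$, with both $\tau_k$ and $\widehat\tau_{k,M}$ understood to equal $L$ when the defining set is empty (so that $V_L \equiv V_{L,M} \equiv 0$ via $C_L \equiv \widehat C_L \equiv 0$). The lower bound $V_k(X(t_k)) - V_{k,M}(X(t_k)) \geq 0$ follows immediately from the definition of $V_k$ as a supremum over all stopping rules in $\mathcal{T}_k$, since $\widehat\tau_{k,M}$ is just one such admissible rule (the estimates $\widehat C_{\cdot,M}$ are measurable with respect to $\mathcal{F}^{\otimes M}$, which is independent of the fresh trajectory, so $\widehat\tau_{k,M} \in \mathcal{T}_k$ pathwise).

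For the upper bound, I would condition on $\mathcal{F}_{t_k}$ and split the sample space into the four disjoint events determined by whether $\tau_k = k$ or $\tau_k > k$ and whether $\widehat\tau_{k,M} = k$ or $\widehat\tau_{k,M} > k$. On $\{\tau_k = k, \widehat\tau_{k,M} = k\}$ both rules pay $f_k(X(t_k))$ and the contribution is zero. On $\{\tau_k > k, \widehat\tau_{k,M} > k\}$ we have $\tau_k = \tau_{k+1}$ and $\widehat\tau_{k,M} = \widehat\tau_{k+1,M}$, so the difference equals $\E^{\mathcal{F}_{t_k}}[V_{k+1}(X(t_{k+1})) - V_{k+1,M}(X(t_{k+1}))]$ restricted to this event, which is controlled by the inductive hypothesis applied at level $k+1$.

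The two mismatched cases produce the boundary terms. On $\{\tau_k > k, \widehat\tau_{k,M} = k\}$ we have $V_k(X(t_k)) = C_k(X(t_k)) > f_k(X(t_k)) = V_{k,M}(X(t_k))$, so the difference equals exactly $|C_k(X(t_k)) - f_k(X(t_k))|$. On $\{\tau_k = k, \widehat\tau_{k,M} > k\}$ the true rule stops with payoff $f_k(X(t_k)) = V_k(X(t_k))$ while $V_{k,M}(X(t_k)) = \E^{\mathcal{F}_{t_k}}[V_{k+1,M}(X(t_{k+1}))]$; here I would add and subtract $C_k(X(t_k)) = \E^{\mathcal{F}_{t_k}}[V_{k+1}(X(t_{k+1}))]$ to decompose the difference into $(f_k - C_k)(X(t_k))$, which equals $|C_k - f_k|(X(t_k))$ on this event since $f_k \geq C_k$, plus $\E^{\mathcal{F}_{t_k}}[V_{k+1}(X(t_{k+1})) - V_{k+1,M}(X(t_{k+1}))]$, which the induction hypothesis handles.

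Combining the four cases and folding the inductive bound at level $k+1$ (applied inside the outer conditional expectation via the tower property, together with the fact that $\{\widehat\tau_{l,M}>l,\tau_l=l\}$ and $\{\widehat\tau_{l,M}=l,\tau_l>l\}$ are $\mathcal{F}_{t_l}$-measurable) yields the telescoping sum $\sum_{l=k}^{L-1} |f_l - C_l|(X(t_l))(\mathbf{1}_{\{\widehat\tau_{l,M}>l,\tau_l=l\}} + \mathbf{1}_{\{\widehat\tau_{l,M}=l,\tau_l>l\}})$ inside $\E^{\mathcal{F}_{t_k}}$, as claimed. The main technical care is in the mismatch case $\{\tau_k = k, \widehat\tau_{k,M} > k\}$: one must verify that after adding and subtracting $C_k$ the remainder really is nonnegative and reduces to a difference already controlled by the induction, rather than an uncontrolled quantity. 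The conditional independence of $\{\widehat C_{\cdot,M}\}$ from the fresh path $X$ is what legitimises treating the indicators as deterministic weights inside $\E^{\mathcal{F}_{t_k}}$ during the telescoping step.
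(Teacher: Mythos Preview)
Your proposal is correct and follows essentially the same route as the paper: backward induction on $k$, a four-case split according to whether each of $\tau_k$ and $\widehat\tau_{k,M}$ equals $k$ or exceeds $k$, with the mismatch case $\{\tau_k=k,\,\widehat\tau_{k,M}>k\}$ handled by adding and subtracting $C_k(X(t_k))$ to isolate $|f_k-C_k|$ plus a nonnegative remainder $\E^{\mathcal{F}_{t_k}}[V_{k+1}-V_{k+1,M}]$ covered by the induction hypothesis. The paper omits the trivial fourth case $\{\tau_k=k,\,\widehat\tau_{k,M}=k\}$ and does not spell out the lower bound or the measurability remarks you include, but the argument is the same.
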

\begin{proof}

We shall use  induction to prove \eqref{BasicIneq}. For \( k=L-1 \)
we have
\begin{multline}
\label{BasiqIneq2}
V_{L-1}(X(t_{L-1}))- V_{L-1,M}(X(t_{L-1}))=
\\
\nonumber
=
\E^{\mathcal{F}_{t_{L-1}}}\left[(f_{L-1}(X(t_{L-1}))-f_{L}(X(t_{L})))
\mathbf{1}_{\{\tau_{L-1}=L-1,\,\widehat\tau_{L-1,M}=L\}}\right]
\\
\nonumber
+\E^{\mathcal{F}_{t_{L-1}}}\left[(f_{L}(X(t_{L}))-f_{L-1}(X(t_{L-1})))
\mathbf{1}_{\{\tau_{L-1}=L,\,\widehat\tau_{L-1,M}=L-1\}}\right]
\\
\nonumber
=|f_{L-1}(X(t_{L-1}))-C_{L-1}(X(t_{L-1}))|\mathbf{1}_{\{\widehat\tau_{L-1,M}\neq\tau_{L-1}\}}
\end{multline}
since events \( \{ \tau_{L-1}=L \} \) and \( \{ \widehat\tau_{L-1,M}=L \} \) are measurable w.r.t. \( \mathcal{F}_{t_{L-1}} \).  Thus, \eqref{BasicIneq} holds with \( k=L-1 \). Suppose
that \eqref{BasicIneq} holds with
\( k=L'+1 \). Let us prove it for \( k=L' \). Consider a decomposition
\begin{eqnarray*}
f_{\tau_{L'}}(X(t_{\tau_{L'}}))-f_{\widehat\tau_{L',M}}(X(t_{\widehat\tau_{L',M}}))&=& S_{1}+S_{2}+S_{3}
\end{eqnarray*}
with
\begin{eqnarray*}
S_{1}&:=&\left( f_{\tau_{L'}}(X(t_{\tau_{L'}}))-f_{\widehat\tau_{L',M}}(X(t_{\widehat\tau_{L',M}})) \right)
\mathbf{1}_{\{\tau_{L'}>L',\,\widehat\tau_{L',M}>L'\}}
\\
S_{2}&:=&\left( f_{\tau_{L'}}(X(t_{\tau_{L'}}))-f_{\widehat\tau_{L',M}}(X(t_{\widehat\tau_{L',M}})) \right)
\mathbf{1}_{\{\tau_{L'}>L',\,\widehat\tau_{L',M}=L'\}}
\\
S_{3}&:=&\left( f_{\tau_{L'}}(X(t_{\tau_{L'}}))-f_{\widehat\tau_{L',M}}(X(t_{\widehat\tau_{L',M}})) \right)
\mathbf{1}_{\{\tau_{L'}=L',\,\widehat\tau_{L',M}>L'\}}.
\end{eqnarray*}
Since
\begin{eqnarray*}
\E^{\mathcal{F}_{t_{L'}}}\left[S_{1}\right]
&=&
\E^{\mathcal{F}_{t_{L'}}}\left[ \left( V_{L'+1}(X(t_{L'+1}))- V_{L'+1,M}(X(t_{L'+1})) \right)
\right]\mathbf{1}_{\{\tau_{L'}>L',\,\widehat\tau_{L',M}>L'\}},
\\
\E^{\mathcal{F}_{t_{L'}}}\left[S_{2}\right]&=&
\left( \E^{\mathcal{F}_{t_{L'}}}\left[ f_{\tau_{L'+1}}(X(t_{\tau_{L'+1}})) \right]
 -f_{L'}(X(t_{L'})) \right)
\mathbf{1}_{\{\tau_{L'}>L',\,\widehat\tau_{L',M}=L'\}}
\\
&=&\left(C_{L'}(X(t_{L'})) -f_{L'}(X(t_{L'})) \right)
\mathbf{1}_{\{\tau_{L'}>L',\,\widehat\tau_{L',M}=L'\}}
\end{eqnarray*}
and
\begin{eqnarray*}
\E^{\mathcal{F}_{t_{L'}}}\left[S_{3} \right]&=&
\left( f_{L'}(X(t_{L'}))-\E^{\mathcal{F}_{t_{L'}}}\left[ f_{\widehat\tau_{L'+1,M}}(X(t_{\widehat\tau_{L'+1,M}}))\right]  \right)
\mathbf{1}_{\{\tau_{L'}=L',\,\widehat\tau_{L',M}>L'\}}
\\
&=&\left(f_{L'}(X(t_{L'}))-C_{L'}(X(t_{L'}))\right)
\mathbf{1}_{\{\tau_{L'}=L',\,\widehat\tau_{L',M}>L'\}}
\\
&&+\E^{\mathcal{F}_{t_{L'}}}\left[ \left( V_{L'+1}(X(t_{L'+1}))- V_{L'+1,M}(X(t_{L'+1})) \right)
\mathbf{1}_{\{\tau_{L'}=L',\,\widehat\tau_{L',M}>L'\}}\right],
\end{eqnarray*}
we get with probability one
\begin{eqnarray*}
V_{L'}(X(t_{L'}))-V_{L',M}(X(t_{L'})&\leq&
\left|f_{L'}(X(t_{L'}))-C_{L'}(X(t_{L'}))\right|
\\
&&\times
\left(\mathbf{1}_{\{\widehat\tau_{L',M}>L',\,\tau_{L'}=L'\}}
+\mathbf{1}_{\{\widehat\tau_{L',M}=L',\,\tau_{L'}>L'\}}\right)
\\
&&+\E^{\mathcal{F}_{t_{L'}}}\left[V_{L'+1}(X(t_{L'+1}))- V_{L'+1,M}(X(t_{L'+1}))
\right].
\end{eqnarray*}
Our induction assumption implies now that
\begin{multline*}
V_{L'}(X(t_{L'}))-V_{L',M}(X(t_{L'}))\leq
\\
\E^{\mathcal{F}_{t_{L'}}}\left[ \sum_{l=L'}^{L-1}|f_{l}(X_{l})-C_{l}(X_{l})|
\left(\mathbf{1}_{\{\widehat\tau_{l,M}>l,\,\tau_{l}=l\}}
+\mathbf{1}_{\{\widehat\tau_{l,M}=l,\,\tau_{l}>l\}}\right)
\right]
\end{multline*}
and hence \eqref{BasicIneq} holds for \( k=L' \).
\end{proof}
Let us continue with the proof of Proposition~\ref{CR}.
Consider the sets \( \mathcal{E}_{l},\, \mathcal{A}_{l,j}\subset \mathbb{R}^{d},\,l=0,\ldots,L-1,\, j=1,2,\ldots, \)
defined as
\begin{eqnarray*}
\mathcal{E}_{l}&:=&\left\{ x\in \mathbb{R}^{d}: \widehat C_{l,M}(x)\leq f_{l}(x),\, C_{l}(x)>f_{l}(x)  \right\}
\\
&&\cup \left\{ x\in \mathbb{R}^{d}: \widehat C_{l,M}(x)>f_{l}(x),\, C_{l}(x)\leq f_{l}(x)  \right\},
\\
\mathcal{A}_{l,0}&:=&\left\{ x\in \mathbb{R}^{d}: 0<\left| C_{l}(x)-f_{l}(x) \right|\leq\gamma^{-1/2}_{M}\right\},
\\
\mathcal{A}_{l,j}&:=&\left\{ x\in \mathbb{R}^{d}: 2^{j-1}\gamma^{-1/2}_{M}<\left| C_{l}(x)-f_{l}(x) \right|\leq 2^{j}\gamma^{-1/2}_{M}
\right\}, \quad j>0.
\end{eqnarray*}
We may write
\begin{eqnarray*}
V_{0}(X(t_{0}))-V_{0,M}(X(t_{0}))
&\leq& \E^{\mathcal{F}_{t_{0}}}
\left[\sum_{l=0}^{L-1}|f_{l}(X(t_{l}))-C_{l}(X(t_{l}))|\mathbf{1}_{\{X(t_{l})\in \mathcal{E}_{l}\}}\right]
\\
&=&\sum_{j=0}^{\infty}\E^{\mathcal{F}_{t_{0}}}
\left[\sum_{l=0}^{L-1}|f_{l}(X(t_{l}))-C_{l}(X(t_{l}))|
\mathbf{1}_{\{X(t_{l})\in \mathcal{A}_{l,j}\cap \mathcal{E}_{l}\}}\right]
\\
&\leq& \gamma_{M}^{-1/2}\sum_{l=0}^{L-1}\P_{t_{l}|t_{0}}\left(0<\left| C_{l}(X(t_{l}))-f_{l}(X(t_{l})) \right|\leq\gamma_{M}^{-1/2}\right)
\\
&& +\sum_{j=1}^{\infty}\E^{\mathcal{F}_{t_{0}}}
\left[\sum_{l=0}^{L-1}|f_{l}(X(t_{l}))-C_{l}(X(t_{l}))|\mathbf{1}_{\{X(t_{l})\in \mathcal{A}_{l,j}\cap \mathcal{E}_{l}\}}\right].
\end{eqnarray*}
Using the fact that
\[
|f_{l}(X(t_{l}))-C_{l}(X(t_{l}))|\leq |\widehat C_{l,M}(X(t_{l})-C_{l}(X(t_{l}))|,
\quad l=0,\ldots,L-1,
\]
on \(\mathcal{E}_{l}\), we get for any \( j\geq 1 \) and \( l\geq 0 \)
\begin{multline*}
\E^{\mathcal{F}_{t_{0}}}\E_{\P_{x_{0}}^{\otimes M}}
\left[|f_{l}(X(t_{l}))-C_{l}(X(t_{l}))|\mathbf{1}_{\{X(t_{l})\in \mathcal{A}_{l,j}\cap \mathcal{E}_{l}\}}\right]
\\
\leq 2^{j}\gamma_{M}^{-1/2} \E^{\mathcal{F}_{t_{0}}}\E_{\P_{x_{0}}^{\otimes M}}
\left[\mathbf{1}_{\{|\widehat C_{l,M}(X(t_{l})-C_{l}(X(t_{l}))|\geq 2^{j-1}\gamma_{M}^{-1/2}\}}
\right.
\\
\left.
\times \mathbf{1}_{\{0<|f_{l}(X(t_{l}))-C_{l}(X(t_{l}))|\leq 2^{j}\gamma_{M}^{-1/2}\}}\right]
\\
\leq 2^{j}\gamma_{M}^{-1/2} \E^{\mathcal{F}_{t_{0}}}
\left[\P_{x_{0}}^{\otimes M}(|\widehat C_{l,M}(X(t_{l}))-C_{l}(X(t_{l}))|
\geq 2^{j-1}\gamma_{M}^{-1/2})\right.
\\
\left. \times \mathbf{1}_{\{0<|f_{l}(X(t_{l}))-C_{l}(X(t_{l}))|\leq 2^{j}\gamma_{M}^{-1/2}\}}\right]
\\
\leq B_{1}2^{j}\gamma_{M}^{-1/2}\exp\left( -B_{2} 2^{j-1} \right)\P_{t_{l}|t_{0}}(0<|f_{l}(X(t_{l}))-C_{l}(X(t_{l}))|\leq 2^{j}\gamma_{M}^{-1/2})
\\
\leq
B_{1} B_{0,l}2^{j(1+\alpha)}\gamma_{M}^{-(1+\alpha)/2}\exp\left(-B_{2}2^{j-1}\right),
\end{multline*}
where Assumption~\ref{BA} is used to get the last inequality. Finally, we get
\begin{multline*}
V_{0}(X(t_{0}))-\E_{\P_{x_{0}}^{\otimes M}}\left[ V_{0,M}(X(t_{0})) \right]
\\
\leq \left[ \sum_{l=0}^{L-1}B_{0,l} \right]\gamma_{M}^{-(1+\alpha)/2}
+ B'\left[\sum_{l=0}^{L-1}B_{0,l} \right]\gamma_{M}^{-(1+\alpha)/2}\sum_{j\geq 1} 2^{j(1+\alpha)}\exp(-B_{2}2^{j-1})
\\
\leq B\left[\sum_{l=0}^{L-1}B_{0,l} \right]\gamma_{M}^{-(1+\alpha)/2}
\end{multline*}
with some constant \( B \) depending on \( B_{1} \), \( B_{2} \) and \( \alpha. \)
\subsection{Proof of Proposition~\ref{LowerBound}}
We have
\begin{multline}
\label{BasiqIneqLB}
V_{0}(X(t_{0}))-\widehat V_{0,M}(X(t_{0}))=
\\
=
\E^{\mathcal{F}_{t_{0}}}\left[(f_{1}(X(t_{1}))-f_{2}(X(t_{2})))1(\tau_{1}=1,\widehat\tau_{1,M}=2)\right]
\\
+\E^{\mathcal{F}_{t_{0}}}\left[(f_{2}(X(t_{2}))-f_{1}(X(t_{1})))1(\tau_{1}=2,\widehat\tau_{1,M}=1)\right]
\\
=\E^{\mathcal{F}_{t_{0}}}\left[ |f_{1}(X(t_{1}))-C_{1}(X(t_{1}))|
\mathbf{1}_{\{\widehat\tau_{1,M}\neq\tau_{1}\}} \right].
\end{multline}
\\
For an integer \( q\geq 1 \) consider a regular grid on \( [0,1]^{d} \)
defined as
\begin{eqnarray*}
    G_{q}=\left\{ \left( \frac{2k_{1}+1}{2q},\ldots, \frac{2k_{d}+1}{2q}\right):\, k_{i}\in \{ 0,\ldots,q-1 \},\, i=1,\ldots,d \right\}.
\end{eqnarray*}
Let \( n_{q}(x)\in G_{q} \) be the closest point to \( x\in \mathbb{R}^{d} \) among
points in \( G_{q} \). Consider the partition \( \mathcal{X}'_{1},\ldots,\mathcal{X}'_{q^{d}} \)
of \( [0,1]^{d} \) canonically defined using the grid \( G_{q} \) (\( x\) and \( y \) belong to the same subset if and only if \( n_{q}(x)=n_{q}(y) \)). Fix an integer \( m\leq q^{d} \). For any
\( i\in \{ 1,\ldots,m \} \), define \( \mathcal{X}_{i}=\mathcal{X}'_{i} \) and \( \mathcal{X}_{0}=\mathbb{R}^{d}\setminus \bigcup_{i=1}^{m}\mathcal{X}_{i} \), so that \( \mathcal{X}_{0},\ldots,\mathcal{X}_{m} \) form a partition of \( \mathbb{R}^{d} \).
Denote by \( \mathcal{B}_{q,j} \) the ball with the center in \( n_{q}(\mathcal{X}_{j}) \)
and radius \( 1/2q \).
\par
 Define a hypercube \( \mathcal{H}=\{ P_{\bar \sigma}: \bar\sigma=(\sigma_{1},\ldots,\sigma_{m}) \in \{ -1,1 \}^{m}\} \) of probability distributions \( \P_{\bar \sigma} \) of the r.v. \( (X(t_{1}),f_{2}(X(t_{2}))) \) valued in \( \mathbb{R}^{d}\times \{ 0,1 \} \) as follows.
For any \( \P_{\bar \sigma}\in \mathcal{H} \) the marginal distribution of \( X(t_{1}) \)
(given \( X(t_{0})=x_{0} \)) does not depend on \( \bar\sigma \) and has a bounded density \( \mu \) w.r.t.  the Lebesgue measure
on \( \mathbb{R}^{d} \) such that \( \P_{\mu}(\mathcal{X}_{0})=0 \) and
\[
\P_{\mu}(\mathcal{X}_{j})=\P_{\mu}(\mathcal{B}_{q,j})=\int_{\mathcal{B}_{q,j}}\mu(x)\,dx=\omega,
\quad j=1,\ldots,m
\]
for some \( \omega>0 \).
In order to ensure that the density \( \mu \) remains  bounded
we assume that \( q^{d}\omega =O(1) \).
\par
The distribution of \( f_{2}(X(t_{2})) \) given \( X(t_{1}) \) is determined by
the probability \( \P_{\bar\sigma}(f_{2}(X(t_{2}))=1|X(t_{1})=x) \) which is equal to
\( C_{1,\bar \sigma}(x) \).  Define
\[
 C_{1,\bar\sigma}(x)=f_{1}(x)+\sigma_{j}\phi(x), \quad x\in \mathcal{X}_{j}, \quad
 j=1,\ldots,m,
\]
and \( C_{1,\bar\sigma}(x)=f_{1}(x) \) on \( \mathcal{X}_{0} \), where
\( \phi(x)=\gamma^{-1/2}_{M}\varphi(q[x-n_{q}(x)]) \), \( \varphi(x)=A_{\varphi}\theta(\| x \|) \)
with some constant \( A_{\varphi}>0 \) and with
\( \theta:\mathbb{R}_{+}\to \mathbb{R}_{+} \) being a non-increasing
infinitely differentiable function such that \( \theta(x)\equiv 1 \) on
\( [0,1/2] \) and \( \theta(x)\equiv 0 \) on \( [1,\infty) \). Furthermore, there exist two real numbers \( 0<f_{-}<f_{+}<1 \) such that
\( f_{-}\leq f_{1}(x) \leq f_{+}. \)
Taking
\( A_{\varphi} \) small enough, we can then ensure that
\( 0\leq C_{1,\bar \sigma}(x)\leq 1 \)
on \( \mathbb{R}^{d} \).
Obviously, it holds \( \phi(x)=A_{\varphi}\gamma^{-1/2}_{M} \)
for \( x\in \mathcal{B}_{q,j} \).
As to the boundary assumption \eqref{BA}, we have
\begin{multline*}
\P_{\mu}(0<|f_{1}(X(t_{1}))-C_{1,\bar \sigma}(X(t_{1}))|\leq\delta)=
\\
\sum_{j=1}^{m}\P_{\mu}(0<|f_{1}(X(t_{1}))-C_{1,\bar \sigma}(X(t_{1}))|
\leq\delta, X(t_{1})\in \mathcal{B}_{q,j})
\\
 =\sum_{j=1}^{m}\int_{\mathcal{B}_{q,j}}\mathbf{1}_{\{0<\phi(x)\leq\delta\}}\mu(x)\,dx
 =m\omega\mathbf{1}_{\{A_{\varphi}\gamma_{M}^{-1/2}\leq\delta\}}
\end{multline*}
and \eqref{BA} holds provided that \( m\omega=O(\gamma^{-\alpha/2}_{M}) \).
Let \( \widehat\tau_{M} \) be a stopping time measurable w.r.t. \( \mathcal{F}^{\otimes M} \),
then the identity \eqref{BasiqIneqLB} leads  to
\begin{multline*}
 \E^{\mathcal{F}_{t_{0}}}_{\P_{\bar\sigma}}[f_{\tau}(X(\tau))]-
 \E_{\P_{\bar\sigma}^{\otimes M}}[\E^{\mathcal{F}_{t_{0}}}f_{\widehat\tau_{M}}(X(\widehat\tau_{M}))]
\\
 =\E_{\P_{\bar\sigma}^{\otimes M}}\E_{P_\mu}^{\mathcal{F}_{t_{0}}}
 \left[ |\Delta_{\bar\sigma}(X(t_{1}))|\mathbf{1}_{\{\widehat\tau_{1,M}\neq\tau_{1}\}}
 \right],
\end{multline*}
with \( \Delta_{\bar\sigma}(X(t_{1}))= f_{1}(X(t_{1}))-C_{1,\bar\sigma}(X(t_{1}))\).
By conditioning on \( X(t_{1}), \) we get
\begin{multline*}
    \E_{\P_{\bar \sigma}^{\otimes M}}
    \E_{\P_{\mu}}^{\mathcal{F}_{t_{0}}}\left[ |\Delta_{\bar \sigma}(X(t_{1}))|
    \mathbf{1}_{\{\widehat\tau_{1,M}\neq\tau_{1}\}} \right]
    \\
    =\omega \sum_{j=1}^{m}\E_{\P_{\bar \sigma}^{\otimes M}}\E_{\P_{\mu}}^{\mathcal{F}_{t_{0}}}
    \left[\phi(X(t_{1}))\mathbf{1}_{\{\widehat\tau_{1,M}\neq\tau_{1}\}}|X(t_{1})\in \mathcal{B}_{q,j} \right]
    \\
     =A_{\varphi}m\omega \gamma_{M}^{-1/2}\E_{\P_{\mu}}^{\mathcal{F}_{t_{0}}}\P_{\bar \sigma}^{\otimes
     M}(\widehat\tau_{1,M}\neq\tau_{1}).
\end{multline*}
Using now a well known Birg\'e's or Huber's lemma \citep[see, e.g.][p. 243]{DGL}, we get
\begin{eqnarray*}
\sup_{\bar\sigma\in \{ -1;+1 \}^{m}}\P_{\bar \sigma}^{\otimes
     M}(\widehat\tau_{1,M}\neq\tau_{1})\geq  \left[ 0.36 \wedge \left( 1-\frac{M K_{\mathcal{H}}}{\log(\mathcal{|H|})} \right)
     \right],
\end{eqnarray*}
where
\( K_{\mathcal{H}}:=\sup_{P,Q\in \mathcal{H}}K(P,Q) \) and \( K(P,Q) \) is
a Kullback-Leibler distance between two measures \( P \) and \( Q \).
Since for any two measures \( P \) and \( Q \) from \( \mathcal{H} \) with \( Q\neq P \)
it holds
\begin{eqnarray*}
K(P,Q)&\leq &\sup_{\substack{
\bar\sigma_{1},\bar\sigma_{2}\in \{ -1;+1 \}^{m} \\ \bar\sigma_{1}\neq \bar\sigma_{2}} }
\E_{\P_{\mu}}^{\mathcal{F}_{t_{0}}}\left[ C_{1,\bar \sigma_{2}}(X(t_{1}))
\log \left\{ \frac{C_{1,\bar \sigma_{1}}(X(t_{1}))}{C_{1,\bar \sigma_{2}}(X(t_{1}))} \right\}
\right.
\\
&&\left.+(1-C_{1,\bar \sigma_{2}}(X(t_{1})))
\log \left\{ \frac{1-C_{1,\bar \sigma_{1}}(X(t_{1}))}{1-C_{1,\bar \sigma_{2}}(X(t_{1}))} \right\} \right]
\\
&\leq&  (1-f_{+}-A_{\varphi})^{-1}(f_{-}-A_{\varphi})^{-1}\E_{\P_{\mu}}^{\mathcal{F}_{t_{0}}}
\left[ \phi^{2}(X(t_{1}))
\mathbf{1}_{\{X(t_{1})\not\in \mathcal{X}_{0}\}} \right]
\end{eqnarray*}
for small enough \( A_{\varphi} \), and \( \log(|\mathcal{H}|)=m \log (2) \), we get
\begin{multline*}
\sup_{\bar\sigma\in \{ -1;+1 \}^{m}}\left\{\E^{\mathcal{F}_{t_{0}}}_{\P_{\bar\sigma}}[f_{\tau,\bar\sigma}(X(\tau))]-\E_{\P_{\bar\sigma}^{\otimes M}}[\E^{\mathcal{F}_{t_{0}}}f_{\widehat\tau_{M},\bar\sigma}(X(\widehat\tau_{M}))]
\right\}\geq
\\
A_{\varphi}m\omega \gamma^{-1/2}_{M}(1-AM\gamma^{-1}_{M}\omega)\gtrsim
\gamma_{M}^{-(1+\alpha)/2},
\end{multline*}
provided that \( m\omega>B\gamma^{-\alpha/2}_{M} \) for some \( B>0 \) and
\( AM\omega<\gamma_{M}, \) where \( A \) is a positive
constant depending on \( f_{-}, f_{+} \) and \( A_{\varphi}. \) Using similar
arguments, we  derive
\begin{eqnarray*}
\sup_{\bar\sigma\in \{ -1;+1 \}^{m}}\P_{\bar \sigma}^{\otimes M}(|C_{1,\bar \sigma}(x)-
\widehat C_{1,M}(x)|>\delta\gamma_{M}^{-1/2})>0
\end{eqnarray*}
for almost \( x \) w.r.t. \( \P_{\mu} \), some \( \delta>0 \) and any estimator \( \widehat C_{1,M} \) measurable w.r.t. \( \mathcal{F}^{\otimes M} \).
\subsection{Proof of Proposition~\ref{CRF}}
Using the arguments similar to ones in the proof of Proposition~\ref{CR}, we get
\begin{multline}
\label{CRFE1}
V_{0}(X(t_{0}))-\E_{\P_{x_{0}}^{\otimes M}}\left[V_{0,M}(X(t_{0}))\right]
\leq
\\
\delta_{0}\sum_{l=0}^{L-1}
\P_{t_{l}|t_{0}}(0<| C_{l}(X(t_{l}))-f_{l}(X(t_{l}))|\leq\delta_{0})
\\
+ \sum_{l=0}^{L-1}\E^{\mathcal{F}_{t_{0}}}
\E_{\P_{x_{0}}^{\otimes M}}\left[| C_{l}(X(t_{l}))-f_{l}(X(t_{l}))|
\right.
\\
\left.
\times\mathbf{1}_{\{X(t_{l})\in \mathcal{E}_{l}\}}
\mathbf{1}_{\{| C_{l}(X(t_{l}))-f_{l}(X(t_{l}))|>\delta_{0}\}}\right]
\end{multline}
with \( \mathcal{E}_{l} \) defined  as in the proof of Proposition~\ref{CR}.
The first summand on the right-hand side of \eqref{CRFE1} is equal to zero due
to \eqref{BAF}. Hence, Cauchy-Schwarz and Minkowski inequalities imply
\begin{eqnarray*}
V_{0}(X(t_{0}))-\E_{\P_{x_{0}}^{\otimes M}}\left[V_{0,M}(X(t_{0}))\right]&\leq &  \sum_{l=0}^{L-1}\left[\E^{\mathcal{F}_{t_{0}}}
| \E^{\mathcal{F}_{t_{l}}} \left[ f_{\tau_{l+1}}(X(t_{\tau_{l+1}})) \right]-f_{l}(X(t_{l}))|^{2}\right]^{1/2}
\\
&&\times\left[\E^{\mathcal{F}_{t_{0}}}\P_{x_{0}}^{\otimes M}(| C_{l}(X(t_{l}))-\widehat C_{l,M}(X(t_{l}))|>\delta_{0}) \right]^{1/2}
\\
&\leq &  2B^{1/2}_{f}\sum_{l=0}^{L-1}\left[ \E^{\mathcal{F}_{t_{0}}}\P_{x_{0}}^{\otimes M}(| C_{l}(X(t_{l}))-\widehat C_{l,M}(X(t_{l}))|>\delta_{0})
\right]^{1/2}.
\end{eqnarray*}
Now the application of \eqref{ExpBF} finishes the proof.
\subsection{Proof of Proposition~\ref{ExpBoundsKern}}
Denote
\begin{eqnarray*}
    \varepsilon_{k,M}(x)=T[\widehat C_{k,M}](x)-C_{k}(x)
\end{eqnarray*}
and
\begin{eqnarray*}
    \zeta_{k,M}(x)=\widetilde C_{k,M}(x)-T[\widehat C_{k,M}](x)
\end{eqnarray*}
for \( k=1,\ldots, L-1 \).
Using the elementary inequality \( |\max(a,x)-\max(a,y)|\leq |x-y|, \) which holds
for any real  numbers \( a \), \( x \) and \( y \), we get
\begin{eqnarray*}
  |\varepsilon_{k,M}(x)|\leq |\zeta_{k,M}(x)|+ \E\left[\left.|\varepsilon_{k+1,M}(X(t_{k+1}))|\right|X(t_{k})=x  \right]
\end{eqnarray*}
and hence
\begin{eqnarray}
\label{EBP}
    |\varepsilon_{k,M}(x)|&\leq &
    \sum_{l=k+1}^{L-1} \E\left[|\zeta_{l,M}(X(t_{l}))||X(t_{k})=x  \right]
    \\
    \nonumber
    &:=&\sum_{l=k+1}^{L-1} \xi_{l,k,M}(x).
\end{eqnarray}
Note that we take expectation in \eqref{EBP} with respect to a new \( \sigma \)-algebra
\( \mathcal{F} \) which is independent of \( \mathcal{F}^{\otimes M} \) and
 \( \{ \zeta_{l,M} \} \) are measurable w.r.t \( \mathcal{F}^{\otimes M} \). Hence, random variables \( \{ \xi_{l,k,M} \} \) are \( \mathcal{F}^{\otimes M}  \)
measurable as well.  According to Lemma~\ref{1PBounds} (see below)
\begin{multline*}
    \P^{\otimes M}_{x_{0}}\left(\xi_{l,k,M}(x)\geq \delta\sqrt{|\log h|/Mh^{d+2\nu }}\right)
   \leq
   \\
   \P_{x_{0}}^{\otimes M}\left(\sup_{y\in \mathcal{A}}|\zeta_{l,M}(y)|\geq\delta\sqrt{|\log h|/Mh^{d+2\nu }}\right)
   \leq D_{2}\exp(-D_{3}\delta)
\end{multline*}
for almost all \( x \) w.r.t. \( \P_{t_{k}|t_{0}} \).
Thus,
\begin{eqnarray*}
    \P_{x_{0}}^{\otimes M}\left(|\varepsilon_{k,M}(x)|\geq \delta\sqrt{|\log h|/Mh^{d+2\nu }}\right)
    \leq LD_{2}\exp(-D_{3}\delta/L).
\end{eqnarray*}
Analogously, using Lemma~\ref{2PBounds} one can prove that
\begin{eqnarray*}
    \P_{x_{0}}^{\otimes M}\left(|\varepsilon_{k,M}(x)|\geq \delta\right)
    \leq B_{4}\exp(-B_{5}Mh^{d+\nu })
\end{eqnarray*}
with some positive constants \( B_{4} \) and \( B_{5}. \)

\begin{lem}
\label{1PBounds}
Let assumptions (AX0)-(AX2), (AK1) and (AK2) be fulfilled.
Then there exist positive constants \( D_{1} \), \( D_{2} \) and  \( D_{3}\),  such that
for any \( h \) satisfying  \( D_{1}h^{\beta}<\sqrt{|\log h|/Mh^{d}}  \)
the estimates
\( \{ T[\widehat C_{k,M}] \} \) based on the truncated local polynomials
estimators of degree \( \lfloor \beta \rfloor \) fulfill
\begin{eqnarray*}
    \P_{x_{0}}^{\otimes M}\left(\sup_{x\in \mathcal{A}}|T[\widehat C_{k,M}](x)-\widetilde C_{k}(x)|\geq \delta\sqrt{|\log h|/Mh^{d+2\nu }}\right)
    \leq D_{2}\exp(-D_{3}\delta),
\end{eqnarray*}
for all \( \delta>\delta_{0} \) and  \( k=1,\ldots,L-1 \).
\end{lem}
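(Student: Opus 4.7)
\textbf{Proof plan for Lemma~\ref{1PBounds}.} The strategy is the standard bias--variance decomposition for local polynomial regression, but promoted to a uniform-in-$x$ exponential deviation bound via VC-type empirical process techniques. Write the unconstrained estimator in the form $\widehat C_{k,M}(x)=Z^{\top}(0)\Gamma(x)^{-1}S(x)$ from Proposition~\ref{LPR}, valid on the event $\mathcal{G}:=\{\inf_{x\in\mathcal{A}}\lambda_{\min}(\Gamma(x))\ge h^{\nu}/\log M\}$; on $\mathcal{G}^{c}$ the truncation sets $T[\widehat C_{k,M}]$ to $0$ so the error is bounded deterministically by $C_{\max}+\sup|\widetilde C_{k,M}|$, and the probability of $\mathcal{G}^{c}$ is handled separately. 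For the first step I would use (AX2) together with a uniform Bernstein/Talagrand bound on $\|\Gamma(x)-\bar\Gamma(t_{k},x)\|$ (the VC-class assumption (AK2) gives the necessary uniformity) to obtain $\P_{x_{0}}^{\otimes M}(\mathcal{G}^{c})\le c_{1}\exp(-c_{2}Mh^{d+2\nu}/\log^{2}M)$, which is absorbed into the target right-hand side once $\delta$ is bounded away from $0$.

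On $\mathcal{G}$ I would decompose the error into bias and stochastic parts: letting $\widetilde q_{x}$ denote the Taylor polynomial of $\widetilde C_{k,M}$ at $x$ of degree $\lfloor\beta\rfloor$, write
\begin{eqnarray*}
\widehat C_{k,M}(x)-\widetilde C_{k,M}(x)
&=&Z^{\top}(0)\Gamma(x)^{-1}\bigl[S(x)-\Gamma(x)\,\mathrm{vec}(\widetilde q_{x})\bigr].
\end{eqnarray*}
The deterministic (bias) contribution equals $Z^{\top}(0)\Gamma(x)^{-1}$ times the vector with $u$-th entry $\frac{1}{Mh^{d}}\sum_{m}\bigl(\widetilde C_{k,M}(X^{(m)}(t_{k}))-\widetilde q_{x}(X^{(m)}(t_{k})-x)\bigr)\bigl(\frac{X^{(m)}(t_{k})-x}{h}\bigr)^{u}K(\cdot)$. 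Hölder smoothness of $\widetilde C_{k,M}$ (inherited from (AX1) on the transition densities, since $\max(f_{k+1},T[\widehat C_{k+1,M}])$ is bounded) yields a pointwise bias of order $h^{\beta}$, and uniformity in $x$ again uses (AK2). Multiplied by $\|\Gamma^{-1}\|\le h^{-\nu}\log M$, the bias is $O(h^{\beta-\nu}\log M)$; the hypothesis $D_{1}h^{\beta}<\sqrt{|\log h|/Mh^{d}}$ forces this to be dominated by the target rate $\delta\sqrt{|\log h|/Mh^{d+2\nu}}$.

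For the stochastic part I would apply a uniform empirical process inequality to the class
\[
\mathcal{F}_{h}=\Bigl\{y\mapsto Y\cdot K\bigl((y-x)/h\bigr)\bigl((y-x)/h\bigr)^{u}:x\in\mathcal{A},\ |u|\le\lfloor\beta\rfloor\Bigr\}
\]
evaluated on the pairs $(X^{(m)}(t_{k}),Y^{(m)}):=(X^{(m)}(t_{k}),\max(f_{k+1},\widehat C_{k+1,M})(X^{(m)}(t_{k+1})))$. Assumption (AK2) implies $\mathcal{F}_{h}$ is a VC-subgraph (or polynomial-covering) class uniformly in $h$, with envelope $O(h^{-\nu/2})$-controlled $L^{2}$ norm of order $h^{d/2}$ (after scaling). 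Talagrand's inequality combined with a peeling argument then yields
\[
\P_{x_{0}}^{\otimes M}\Bigl(\sup_{x\in\mathcal{A},|u|\le\lfloor\beta\rfloor}\bigl|(S(x)-\E S(x))_{u}\bigr|\ge\delta\sqrt{|\log h|/Mh^{d}}\Bigr)\le c_{3}\exp(-c_{4}\delta);
\]
conditioning on the $\sigma$-algebra under which $\widehat C_{k+1,M}$ is fixed removes the cross-randomness. Multiplying by $\|\Gamma^{-1}\|\le h^{-\nu}\log M$ produces the advertised $h^{-2\nu}$ in the final rate.

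The main obstacle is the uniform exponential deviation of the stochastic term: one needs (AK2) to guarantee that $\mathcal{F}_{h}$ has a VC dimension independent of $h$, and a version of Talagrand's inequality whose constants depend only on the VC dimension and the envelope. A secondary nuisance is the propagation of randomness through $\widehat C_{k+1,M}$ inside the response $Y^{(m)}$; this is resolved by invoking the construction of Section~\ref{EstAlg}, which places $\widehat C_{k+1,M}$ in an independent copy of $\mathcal{F}^{\otimes M}$, so the empirical process bound applies conditionally and the bound is uniform in the realisation of $\widehat C_{k+1,M}$ thanks to the truncation at $[0,C_{\max}]$.
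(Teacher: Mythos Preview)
Your overall strategy---eigenvalue control on $\Gamma$, then a bias/variance decomposition of $\widehat C_{k,M}-\widetilde C_{k,M}=Z^{\top}(0)\Gamma^{-1}\varepsilon_{M}$, with the stochastic part handled by VC/Talagrand bounds---is exactly the paper's approach. However, there is a genuine quantitative gap in your argument.

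You work on the event $\mathcal{G}=\{\lambda_{\min}(\Gamma)\ge h^{\nu}/\log M\}$ and use the resulting bound $\|\Gamma^{-1}\|\le h^{-\nu}\log M$. This extra $\log M$ does not disappear. For the bias: your bound is $O(h^{\beta-\nu}\log M)$, and the hypothesis $D_{1}h^{\beta}<\sqrt{|\log h|/(Mh^{d})}$ only gives $h^{\beta-\nu}\log M\le D_{1}^{-1}(\log M)\,\sqrt{|\log h|/(Mh^{d+2\nu})}$, which is \emph{not} dominated by $\delta\sqrt{|\log h|/(Mh^{d+2\nu})}$ for fixed $\delta$. For the stochastic term the same thing happens: multiplying the deviation $\delta\sqrt{|\log h|/(Mh^{d})}$ of $S-\E S$ by $h^{-\nu}\log M$ gives $\delta(\log M)\sqrt{|\log h|/(Mh^{d+2\nu})}$, so to match the target threshold you would have to replace $\delta$ by $\delta/\log M$ in the empirical process bound, yielding only $\exp(-c\delta/\log M)$ rather than $\exp(-c\delta)$. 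The fix, which is what the paper does, is to work on the stronger event $\{\inf_{x}\lambda_{\Gamma}(x)>\gamma_{0}h^{\nu}/2\}$: by (AX2) plus the uniform concentration of $\Gamma-\E\Gamma$ this event still has complementary probability $\le L_{0}\exp(-cMh^{d+\nu})$, it is contained in $\mathcal{G}$ for $M$ large (so truncation is inactive), and on it $\|\Gamma^{-1}\|\le 2/(\gamma_{0}h^{\nu})$ with no logarithmic loss.

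A smaller point: your claim that Section~\ref{EstAlg} ``places $\widehat C_{k+1,M}$ in an independent copy of $\mathcal{F}^{\otimes M}$'' is a misreading---the backward recursion uses the \emph{same} $M$ paths throughout, so $\widehat C_{k+1,M}$ and the step-$k$ design are dependent. What actually rescues the argument (and you do say this at the end) is that the truncation confines the responses to $[0,C_{\max}]$, so the envelope and variance bounds entering the VC/Talagrand inequality are uniform over all possible realisations of $\widehat C_{k+1,M}$; the exponential bound then holds conditionally with constants that do not depend on that realisation.
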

\begin{lem}
\label{2PBounds}
Let assumptions (AX0)-(AX2), (AK1) and (AK2) be fulfilled and \( \sqrt{|\log h|/Mh^{d+2\nu }}=o(1) \)
for \( M\to \infty. \)
Then  there exist positive constants  \( D_{4} \), \( D_{5} \)
and \( D_{6} \) such that for any \( \delta\geq D_{4}h^{\beta} \) the inequality
\begin{eqnarray*}
    \P_{x_{0}}^{\otimes M}\left(\sup_{x\in \mathcal{A}}|T[\widehat C_{k,M}](x)-\widetilde C_{k}(x)|\geq \delta\right)
    \leq D_{5}\exp(-D_{6}Mh^{d+\nu })
\end{eqnarray*}
holds for all \( k=1,\ldots,L-1 \).
\end{lem}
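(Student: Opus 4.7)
The plan is to follow the classical bias/variance decomposition for local polynomial estimators and combine it with VC-type empirical process tools enabled by (AK2) to achieve uniform-in-$x$ exponential concentration. Using the matrix representation from Proposition~\ref{LPR}, introduce the event
\[
\Omega_k := \Bigl\{\inf_{x\in\mathcal{A}}\lambda_{\min}(\Gamma(x)) \geq h^\nu/\log M\Bigr\},
\]
on which the truncation $T[\,\cdot\,]$ does not reset the estimator to zero. On $\Omega_k^c$ we have $T[\widehat C_{k,M}](x) = 0$ while $|\widetilde C_k(x)| \leq C_{\max}$, so the contribution of $\Omega_k^c$ to the probability bound is controlled entirely by $\P_{x_0}^{\otimes M}(\Omega_k^c)$. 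On $\Omega_k$ I decompose
\[
T[\widehat C_{k,M}](x) - \widetilde C_k(x) = B(x) + S(x),
\]
where $B(x)$ is the conditional bias given $\widehat C_{k+1,M}$ and the design $(X^{(m)}(t_k))_m$, and $S(x)$ is the centered weighted sum of the responses $\widehat Y^{(m)}:=\max\{f_{k+1}(X^{(m)}(t_{k+1})),\widehat C_{k+1,M}(X^{(m)}(t_{k+1}))\}$. It then suffices to show that $\sup_x|B(x)| \leq \delta/2$ deterministically on $\Omega_k$ whenever $\delta \geq D_4 h^\beta$, and that both $\P_{x_0}^{\otimes M}(\Omega_k^c)$ and $\P_{x_0}^{\otimes M}(\Omega_k\cap\{\sup_x|S(x)| > \delta/2\})$ are bounded by $D_5\exp(-D_6 Mh^{d+\nu})$.

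The bias bound follows from H\"older smoothness: since $\widehat C_{k+1,M}$ is truncated to $[0,C_{\max}]$, assumption (AX1) ensures that $\widetilde C_k(x) = \int \max\{f_{k+1}(y),\widehat C_{k+1,M}(y)\}\,p(t_{k+1},y|t_k,x)\,dy$ lies in some H\"older class $\Sigma(\beta,H',\mathcal{A})$ with $H'$ independent of $M$. The polynomial reproducing property of the local polynomial fit of degree $\lfloor\beta\rfloor$ and a Taylor-remainder estimate then give $\sup_{x\in\mathcal{A}}|B(x)| \leq \widetilde D_4 h^\beta$ on $\Omega_k$ (up to logarithmic factors which are absorbed into the constant). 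For $\P_{x_0}^{\otimes M}(\Omega_k^c)$, assumption (AX2) provides $\lambda_{\min}(\bar\Gamma(x)) \geq \gamma_0 h^\nu$ uniformly, so it is enough to bound $\sup_{x\in\mathcal{A}}\|\Gamma(x)-\bar\Gamma(x)\|$ by $h^\nu/2$, say. Each entry of $\Gamma(x) - \bar\Gamma(x)$ is a centered average of bounded random variables whose variance scales like $h^d/M$, so pointwise Bernstein yields the required exponential rate, and the uniform supremum is handled via the VC-subgraph property of the function classes $\{K((\cdot-x)/h)z^{u_1+u_2} : x\in\mathcal{A}\}$ granted by (AK2), whose VC-dimension is independent of $h$.

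For the stochastic term $S(x)$, condition on $\widehat C_{k+1,M}$ and on the design $(X^{(m)}(t_k))_m$. Under this conditioning, the variables $\widehat Y^{(m)} - \widetilde C_k(X^{(m)}(t_k))$ are centered, independent across $m$ by the Markov property of $X$, and uniformly bounded. On $\Omega_k$ the effective weights $K((X^{(m)}(t_k)-x)/h)\,Z^\top(0)\Gamma^{-1}(x)Z((X^{(m)}(t_k)-x)/h)/(Mh^d)$ are of magnitude $O(h^{-d-\nu}\log M/M)$, so Bernstein's inequality yields a pointwise exponential bound of order $\exp(-cMh^{d+\nu}\delta)$; the supremum over $x\in\mathcal{A}$ is then controlled by the same VC covering argument used above, preserving the rate $\exp(-cMh^{d+\nu})$ after integration. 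The main obstacle is precisely this dependence structure: the responses $\widehat Y^{(m)}$ are not unconditionally independent of the design, because $\widehat C_{k+1,M}$ was built from the same $M$ paths. The conditioning just described is the device that restores the required independence across $m$; Bernstein and the VC bounds apply conditionally, and the unconditional inequality follows by taking expectations.
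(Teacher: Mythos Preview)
Your approach is essentially the one the paper uses (it proves Lemma~\ref{2PBounds} by adapting the argument for Lemma~\ref{1PBounds}): control the smallest eigenvalue of $\Gamma(x)$ uniformly via (AX2) and a VC concentration bound, then on the good eigenvalue set split the error into a stochastic part and a Taylor/H\"older bias part bounded by $O(h^\beta)$. Two small technical differences are worth flagging. First, the paper does not keep $\Gamma^{-1}(x)$ inside the weights of the stochastic term; it writes $\widehat C_{k,M}(x)-\widetilde C_k(x)=Z^\top(0)\Gamma^{-1}(x)\varepsilon_M(x)$, bounds $\sup_{x}\max_u|\varepsilon_{M,u}(x)|$ directly by applying Proposition~\ref{EIP} (a Talagrand-type inequality for VC classes) to the simple kernel--monomial classes guaranteed VC by (AK2), and only then multiplies by $\|\Gamma^{-1}\|\le 2\gamma_0^{-1}h^{-\nu}$ on the set $\{\inf_x\lambda_\Gamma(x)>\gamma_0 h^\nu/2\}$. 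This sidesteps having to argue that the $\Gamma^{-1}(x)$-weighted class is itself VC and also removes the stray $\log M$ factor that your eigenvalue threshold $h^\nu/\log M$ would introduce into the exponent. Second, the dependence obstacle you correctly flag---that $\widehat C_{k+1,M}$ is built from the same $M$ paths---is treated by the paper in exactly the same informal way: it regards $\widetilde C_k$ as a fixed target and asserts $\E[\Delta^{(1)}_{u,m}]=0$ before invoking Proposition~\ref{EIP}, so your conditioning device is neither more nor less rigorous than the paper's own treatment on this point.
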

\begin{proof} We give the proof only for Lemma~\ref{1PBounds}. Lemma~\ref{2PBounds}
can be proved in a similar way.
Fix some natural \( r>0 \) such that \( 0<r\leq L \)
and consider  the matrix
\( \Gamma=(\Gamma_{u_{1},u_{2}})_{|u_{1}|,|u_{2}|\leq \lfloor \beta \rfloor} \) with elements
\begin{eqnarray*}
  \Gamma_{u_{1},u_{2}}=\frac{1}{Mh^{d}}\sum_{m=1}^{M}\left(\frac{X^{(m)}(t_{r})-x}
  {h}\right)^{u_{1}+u_{2}}K\left( \frac{X^{(m)}(t_{r})-x}{h} \right).
\end{eqnarray*}
The smallest eigenvalue \( \lambda_{\Gamma} \) of the matrix \( \Gamma \) satisfies
\begin{eqnarray}
\label{lamdag}
\nonumber
\lambda_{\Gamma}&=&\min_{\| W \|=1}W^{\top}\Gamma W
\\
\nonumber
&\geq& \min_{\| W \|=1}W^{\top}\E[\Gamma] W+\min_{\| W \|=1}W^{\top}(\Gamma-\E [\Gamma]) W
\\
&\geq& \min_{\| W \|=1}W^{\top}\E[\Gamma] W-\sum_{|u_{1}|,|u_{2}|\leq
\lfloor\beta\rfloor}|\Gamma_{u_{1},u_{2}}-\E[\Gamma_{u_{1},u_{2}}]|.
\end{eqnarray}
By Assumption (AX2)
\begin{eqnarray*}
\inf_{x\in \mathcal{A}}
\min_{\| W \|=1}\left[ W^{\top}\E[\Gamma(x)] W \right]\geq\gamma_{0}h^{\nu }
\end{eqnarray*}
with some \( \gamma_{0}>0. \)
For \( m=1,\ldots, M, \) and any multi-indices \( u_{1} \), \( u_{2} \)
such that \( |u_{1}|, |u_{2}|\leq \lfloor\beta\rfloor \), define
\begin{multline*}
\Delta_{m}(x)=\frac{1}{h^{d}}\left( \frac{X^{(m)}(t_{r})-x}{h} \right)^{u_{1}+u_{2}}
K\left( \frac{X^{(m)}(t_{r})-x}{h} \right)
\\
-\int_{\mathbb{R}^{d}}z^{u_{1}+u_{2}}
K(z)p(t_{r},x+hz|t_{0},x_{0})\, dz.
\end{multline*}
We have \( \E_{\P_{t_{r}|t_{0}}} [\Delta_{m}(x)]=0 \),
\[
|\Delta_{m}(x)|\leq h^{-d}\sup_{z\in \mathbb{R}^{d}}\left[ (1+\| z \|^{2\beta})K(z)
\right]=:K_{1}h^{-d}
\]
and
\begin{eqnarray*}
\E_{\P_{t_{r}|t_{0}}} [\Delta_{m}(x)]^{2}&\leq & \int_{\mathbb{R}^{d}}z^{2u_{1}+2u_{2}}
K^{2}(z)p(t_{r},x+hz|t_{0},x_{0})\, dz
\\
&\leq & \frac{p_{\max}}{h^{d}}\int_{\mathbb{R}^{d}}(1+\| z \|^{4\beta})K^{2}(z)\,dz=:
K_{2}h^{-d},
\end{eqnarray*}
where \( p_{\max}=\sup_{z\in \mathbb{R}^{d}} p(t_{r},z|t_{0},x_{0})\)
and \( K_{1}, K_{2} \) are two positive constants.
Due to assumption (AK2), the class of functions
\begin{eqnarray*}
\left\{ \left( \frac{x-\cdot}{h}\right)^{u_{1}+u_{2}} K\left( \frac{x-\cdot}{h}\right):\, x\in \mathbb{R}^{d},
\,h\in \mathbb{R}\setminus \{ 0 \}, \, |u_{1}|,|u_{2}|\leq \lfloor\beta\rfloor  \right\}
\end{eqnarray*}
is a bounded Vapnik-\v{C}ervonenkis  class of measurable functions (see \citet{Du}). According to
Proposition~\ref{EIP} (see Appendix), we have for any \( \zeta>0 \)
\begin{multline}
\label{EIGamma}
\P_{t_{r}|t_{0}}\left( \sup_{x\in \mathcal{A}}|\Gamma_{u_{1},u_{2}}(x)-\E\Gamma_{u_{1},u_{2}}(x)|
\geq \zeta  \right)
\\
=\P_{t_{r}|t_{0}}\left( \sup_{x\in \mathcal{A}}\frac{1}{M}
\left| \sum_{m=1}^{M}\Delta_{m}(x) \right|\geq\zeta \right)
\\
\leq L_{0}\exp(-\zeta B_{0} M h^{d})
\end{multline}
with some positive constants \( L_{0} \) and \( B_{0} \).
Combining  \eqref{lamdag} and \eqref{EV_WGW} with \eqref{EIGamma}, we get
\begin{eqnarray*}
    \P_{t_{r}|t_{0}}\left(\inf_{x\in \mathcal{A}}\lambda_{\Gamma}(x)
    \leq\gamma_{0}h^{\nu }/2\right)\leq  L_{0}N^{2}_{\beta}\exp(-\gamma_{0} B_{0} M
    h^{d+\nu }/2N^{2}_{\beta}),
\end{eqnarray*}
where \( N^{2}_{\beta} \) is the number of elements in the matrix \( \Gamma. \)
Assume that \( M \) is large enough so that \( \gamma_{0}/2>(\log M)^{-1}. \)
Then on the set \( \{ \inf_{x\in \mathcal{A}}\lambda_{\Gamma}(x)>\gamma_{0}h^{\nu }/2 \} \) we have
\[
|T[\widehat C_{r,M}](x)-\widetilde C_{r}(x)|\leq
|\widehat C_{r,M}(x)-\widetilde C_{r}(x)|, \quad x\in \mathcal{A}
\]
since \( \sup_{x\in \mathcal{A}}\widetilde C_{r}(x)\leq C_{\max}. \)
Therefore, it holds for any \( \zeta>0 \)
\begin{multline*}
    \P_{t_{r}|t_{0}}\left(\sup_{x\in \mathcal{A}}|T[\widehat C_{r,M}](x)-
    \widetilde C_{r}(x)|\geq \zeta\right)\leq
    \P_{t_{r}|t_{0}}\left(\inf_{x\in \mathcal{A}}\lambda_{\Gamma}(x)
     \leq\gamma_{0}h^{\nu }/2\right)
    \\
    +\P_{t_{r}|t_{0}}\left(\sup_{x\in \mathcal{A}}|\widehat C_{r,M}(x)
    -\widetilde C_{r}(x)|\geq \zeta,\,
    \inf_{x\in \mathcal{A}}\lambda_{\Gamma}(x)>\gamma_{0}h^{\nu }/2\right).
\end{multline*}
Introduce the matrix \( Q=(Q_{m,u})_{1\leq m \leq M,\, |u|\leq \lfloor\beta\rfloor} \) with elements
\begin{eqnarray*}
    Q_{m,u}=\left(\frac{X^{(m)}(t_{r})-x}{h}\right)^{u}\sqrt{\frac{1}{Mh^{d}}K\left( \frac{X^{(m)}(t_{r})-x}{h} \right)}.
\end{eqnarray*}
Denote by \( Q_{u} \) the \( u \)th column of \( Q \) and
define
\begin{eqnarray*}
    Q^{C}(x):=\sum_{|u|\leq \lfloor\beta\rfloor} \frac{\widetilde C_{r}^{(u)}(x)h^{u}}{u!}Q_{u}.
\end{eqnarray*}
Since \(\Gamma=Q^{\top}Q \), we get \( Z^{\top}(0)
\Gamma^{-1}Q^{\top}Q_{u}=
\mathbf{1}_{\{ u=(0,\ldots,0) \}}\) for any
\( s \) with \( |s|\leq \lfloor\beta\rfloor. \) Hence \( Z^{\top}(0)
\Gamma^{-1}Q^{\top}Q^{C}=\widetilde C_{r}(x) \).
Thus, we can write
\begin{eqnarray*}
    \widehat C_{r,M}(x)-\widetilde C_{r}(x)=
    Z^{\top}(0)\Gamma^{-1}(S-Q^{\top}Q^{C})=:Z^{\top}(0)\Gamma^{-1}\varepsilon_{M}(x),
\end{eqnarray*}
where \( \varepsilon_{M}(x) \) is a vector valued function  with components
\begin{eqnarray*}
    \varepsilon_{M,u}(x)&=&\frac{1}{Mh^{d}}\sum_{m=1}^{M}\left[Y_{r+1}^{(m)}-\widetilde C_{r,x}(X^{(m)}(t_{r}))\right]
    \left( \frac{X_{r}^{(m)}-x}{h}\right)^{u}K\left( \frac{X_{r}^{(m)}-x}{h} \right)
\end{eqnarray*}
and
\( Y_{r+1}^{(m)}=\max(f_{r+1}(X^{(m)}(t_{r+1})),T[\widehat C_{r+1,M}](X^{(m)}(t_{r+1}))). \)
So, on the set \( \{ \inf_{x\in \mathcal{A}}\lambda_{\Gamma}(x)>\gamma_{0}h^{\nu}/2 \} \) we get
\begin{eqnarray*}
  |\widehat C_{r,M}(x)-\widetilde C_{r}(x)|\leq \| \Gamma \varepsilon_{M} \|\leq
  \lambda_{\Gamma}^{-1}\| \varepsilon_{M} \|\leq 2h^{-\nu }\gamma_{0}^{-1}\| \varepsilon_{M} \|\leq
  2h^{-\nu }\gamma_{0}^{-1}N^{1/2}_{\beta}\max_{u}|\varepsilon_{M,u}(x)|.
\end{eqnarray*}
Denote
\begin{eqnarray*}
    \Delta^{(1)}_{u,m}(x)&:=&\frac{1}{h^{d}}\left[Y_{r+1}^{(m)}-\widetilde C_{r}(X^{(m)}(t_{r}))\right]\left( \frac{X_{r}^{(m)}-x}{h}\right)^{u}K\left( \frac{X_{r}^{(m)}-x}{h} \right),
    \\
    \Delta^{(2)}_{u,m}(x)&:=&\frac{1}{h^{d}}\left[\widetilde C_{r}(X^{(m)}(t_{r}))-\widetilde C_{r,x}(X^{(m)}(t_{r}))\right]\left( \frac{X_{r}^{(m)}-x}{h}\right)^{u}K\left( \frac{X_{r}^{(m)}-x}{h} \right).
\end{eqnarray*}
It holds
\begin{eqnarray*}
    |\varepsilon_{M,u}|\leq \left| \frac{1}{M}\sum_{m=1}^{M}\Delta_{u,m}^{(1)} \right|+\left| \frac{1}{M}\sum_{m=1}^{M}\left[ \Delta_{u,m}^{(2)}-\E\Delta_{u,m}^{(2)} \right] \right|+|\E\Delta_{u,m}^{(2)}|.
\end{eqnarray*}
Note that \( \E_{\P_{t_{r}|t_{0}}}  \left[ \Delta_{u,m}^{(1)} \right]=0\) and
\begin{eqnarray*}
    |\Delta_{u,m}^{(1)}(x)|\leq A_{11}h^{-d},\quad \Var \left[ \Delta_{u,m}^{(1)}(x) \right]\leq A_{12}h^{-d},
    \\
    \left| \Delta_{u,m}^{(2)}(x)-\E\left[ \Delta_{u,m}^{(2)}(x) \right] \right|\leq A_{21}h^{\beta-d},
    \quad \Var \left[ \Delta_{u,m}^{(2)}(x) \right]\leq A_{22}h^{2\beta-d}
\end{eqnarray*}
with some positive constants \( A_{11} \), \( A_{12} \), \( A_{21} \) and \( A_{22} \) not depending
on \( x \).
Proposition~\ref{EIP}
implies that for any \( \delta\geq \delta_{0}>0 \)
\begin{eqnarray*}
    \P_{t_{r}|t_{0}}\left( \left\| \frac{1}{M}\sum_{m=1}^{M}\Delta_{u,m}^{(1)} \right\|_{\infty} \geq\delta\sqrt{|\log h|/Mh^{d}}\right)\leq L_{1}\exp\left( -\delta B_{1} |\log h| \right)
\end{eqnarray*}
with some positive constants \( L_{1} \) and \( B_{1} \).
Furthermore, due to the representation
\begin{multline*}
\widetilde C_{r}(z)-\widetilde C_{r,x}(z)=\lfloor\beta\rfloor\sum_{|u|=\lfloor\beta\rfloor}
\frac{(z-x)^{u}}{u!}
\\
\times\int_{0}^{1}\left[ \widetilde C^{(u)}_{r}(x+w(z-x))
-\widetilde C^{(u)}_{r}(x) \right](1-w)^{\lfloor\beta\rfloor-1}\, dw
\end{multline*}
we get for any two points \( x_{1}\) and  \(x_{2}\) in \(\mathbb{R}^{d} \)
\begin{eqnarray*}
\|\widetilde C_{r}(\cdot)-\widetilde C_{r,x_{1}}(\cdot)-
(\widetilde C_{r}(\cdot)-\widetilde C_{r,x_{2}}(\cdot))\|_{\mathcal{A}}
\leq \| x_{1}-x_{2} \|^{\beta-\lfloor\beta\rfloor}.
\end{eqnarray*}
Now it can be shown (see \citet{Du}) that the class
\begin{eqnarray*}
\left\{ \left[\widetilde C_{r}(\cdot)-\widetilde C_{r,x}(\cdot)\right]\left( \frac{\cdot-x}{h}\right)^{u}K\left( \frac{\cdot-x}{h} \right):\, x\in \mathbb{R}^{d},
\,h\in \mathbb{R}\setminus \{ 0 \}, \, |u|\leq \lfloor\beta\rfloor  \right\}
\end{eqnarray*}
is a bounded Vapnik-\v{C}ervonenkis  class of measurable functions.
Hence
\begin{eqnarray*}
    \P_{t_{r}|t_{0}}\left( \left\| \frac{1}{M}\sum_{m=1}^{M}\left[ \Delta_{u,m}^{(2)}
    -\E_{\P_{t_{r}|t_{0}}} \Delta_{u,m}^{(2)} \right] \right\|_{\infty} \geq\delta\sqrt{|\log h|/Mh^{d}}\right)
    \leq L_{2}\exp\left( -\delta B_{2} |\log h| \right)
\end{eqnarray*}
for \( \delta\geq \delta_{0}>0 \) and some positive constants \( L_{2} \) and \( B_{2}. \)
Furthermore, using the inequality \( |\E_{\P_{t_{r}|t_{0}}} [\Delta_{u,m}^{(2)}]|\leq A_{3}h^{\beta}, \)
we arrive at
\begin{eqnarray*}
\P_{t_{r}|t_{0}}\left( \sup_{x\in \mathcal{A}}|\varepsilon_{M,u}(x)|\geq
 \gamma_{0}\delta\sqrt{|\log h|/(Mh^{d}N_{\beta})} \right)
\leq  L_{3}\exp\left( -\delta B_{3} |\log h| \right)
\end{eqnarray*}
with some positive constants \( L_{3} \) and \( B_{3} \), provided that
\( 6\gamma_{0}^{-1}N^{1/2}_{\beta}
A_{3}h^{\beta}\leq \delta\sqrt{|\log h|/Mh^{d}}.\)
\end{proof}

\section{Appendix}

\subsection{Some results from the theory of empirical processes}
\paragraph{Definition}
A class \( \mathcal{F} \) of functions on a measurable space \( (X,\mathcal{X}) \)
is called a bounded Vapnik-\v{C}ervonenkis class of functions if there exist positive
numbers \( A \) and \( \omega \) such that, for any probability measure \( \P \)
on \( (X,\mathcal{X}) \) and any \( 0<\rho<1 \)
\begin{eqnarray}
\label{VC}
\mathcal{N}(\mathcal{F},L_{2}(\P),\rho\| F \|_{L_{2}(\P)})\leq
\left( \frac{A}{\rho} \right)^{\omega},
\end{eqnarray}
where \( \mathcal{N}(S,d,\varepsilon) \) denotes the \( \varepsilon \)-covering number
of \(S\) in a metric \( d \), and \( F:=\sup_{f\in \mathcal{F}}|f| \) is the envelope
of \( \mathcal{F} \). The following proposition is a key tool
for obtaining convergence
rates for local type estimators.
\begin{prop}[\citet{T}, \citet{GG}]
\label{EIP}
Let \( \mathcal{F} \) be a measurable uniformly bounded VC class of functions,
and let \( \sigma \) and \( U \) be any numbers such that
\( \sup_{f\in \mathcal{F}}\Var(f)\leq \sigma^{2} \), \( \sup_{f\in \mathcal{F}}
\| f \|_{\infty}\leq U \) and \( 0<\sigma<U/2 \).
Then, there exist a universal constant \( B \) and constants \( C \) and \( L \),
depending only on the VC characteristics \( A \) and \( \omega \) of the class \( \mathcal{F}
\), such that
\begin{eqnarray*}
\E\left[ \sup_{f\in \mathcal{F}}
\left| \sum_{m=1}^{M}(f(X_{m})-\E f(X_{1})) \right| \right]
\leq B\left[ \omega U \log \frac{AU}{\sigma}+\sqrt{\omega}\sqrt{M\sigma^{2}\log \frac{AU}{\sigma}}
\right].
\end{eqnarray*}
If moreover \( \sqrt{M}\sigma\geq C_{1}U
\sqrt{\log(U/\sigma)}, \)  there exist constants  \( L \) and \( C \) which
depend only on the VC characteristics of \( \mathcal{F} \), such that, for all \( \lambda\geq C \) and \( t \) satisfying
\begin{eqnarray*}
    C\sqrt{M}\sigma\sqrt{\log \frac{U}{\sigma}}\leq t \leq \lambda \frac{M\sigma^{2}}{U},
\end{eqnarray*}

\begin{eqnarray*}
\label{EI}
\P\left( \sup_{f\in \mathcal{F}}
\left| \sum_{m=1}^{M}(f(X_{m})-\E f(X_{1})) \right|
>t \right)
\leq  L\exp\left( -\frac{\log(1+\lambda/(4L))}{\lambda L}\frac{t^{2}}{ M \sigma^{2}} \right).
\end{eqnarray*}
\end{prop}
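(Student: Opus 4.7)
The plan is to assemble the bound from three classical ingredients in empirical process theory: symmetrization, a uniform entropy estimate for VC classes, and a Talagrand/Gin\'e--Guillou concentration inequality. Write $Z := \sup_{f \in \mathcal{F}} |\sum_{m=1}^{M} (f(X_m) - \E f(X_1))|$.

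First, I would establish the expectation bound. By Gin\'e--Zinn symmetrization, $\E Z$ is bounded (up to a universal constant) by the Rademacher supremum $\E \sup_f |\sum_m \epsilon_m f(X_m)|$. Conditionally on $X_1,\dots,X_M$, this is a sub-Gaussian process w.r.t.\ the $L_2(\P_M)$ pseudo-metric, where $\P_M$ is the empirical measure. Haussler's inequality, applied to the uniform entropy \eqref{VC}, yields the bound
\[
\mathcal{N}(\mathcal{F}, L_2(Q), \tau) \leq (AU/\tau)^{\omega}
\]
uniformly over all probability measures $Q$ on $(X,\mathcal{X})$. A Dudley chaining argument, with the base level of the chain controlled by the envelope $U$ and the top level by $\hat\sigma_M := \sup_{f\in\mathcal{F}} \|f\|_{L_2(\P_M)}$, gives conditionally
\[
\E_{\epsilon}\sup_f \Bigl| \sum_m \epsilon_m f(X_m) \Bigr| \lesssim \int_0^{\hat\sigma_M \sqrt{M}} \sqrt{\omega \log(AU/\tau)}\, d\tau + \omega U \log(AU/\hat\sigma_M).
\]
Taking expectation over $X_1,\dots,X_M$, using concavity of the integrand and the fact that $\E \hat\sigma_M^2 \leq \sigma^2$ (which itself follows from a preliminary weak form of Talagrand's inequality, bootstrapping the expected empirical variance by its population counterpart), one obtains the claimed bound of order $\sqrt{\omega M \sigma^2 \log(AU/\sigma)} + \omega U \log(AU/\sigma)$.

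For the tail bound I would invoke Talagrand's concentration inequality in the Bousquet/Gin\'e--Guillou form: for a bounded, measurable VC class of centered functions,
\[
\P(Z \geq \E Z + t) \leq \exp\!\left( -\frac{M \sigma_*^2}{U^2}\, \psi\!\left(\frac{U t}{M \sigma_*^2}\right) \right),
\]
where $\sigma_*^2 = \sigma^2 + 2 U \E Z / M$ and $\psi(x) = (1+x)\log(1+x) - x$. The hypothesis $\sqrt{M}\sigma \geq C_1 U \sqrt{\log(U/\sigma)}$ ensures that the chaining "base'' term $\omega U \log(AU/\sigma)$ is dominated by the Gaussian term $\sqrt{\omega M \sigma^2 \log(AU/\sigma)}$ in $\E Z$, so that $\sigma_*^2 \asymp \sigma^2$ up to a constant depending only on the VC characteristics. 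Parametrising $t = \lambda M \sigma^2 / U$ in the admissible range and using the elementary lower bound $\psi(x) \geq x \log(1 + x/(4L))/L$ for $x \geq 0$ (tuning the constant $L$ from the VC characteristics) converts the Bennett-type exponent into the stated form $\log(1 + \lambda/(4L))/(\lambda L) \cdot t^2/(M \sigma^2)$.

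The main obstacle is keeping precise track of the constants: the constant $L$ in the final exponent must simultaneously absorb (i) the symmetrization constant, (ii) the Dudley chaining constant, (iii) the replacement of $\sigma_*^2$ by $\sigma^2$, and (iv) the conversion of $\psi$ into a sub-Gaussian-like exponent. The lower threshold $t \geq C\sqrt{M}\sigma \sqrt{\log(U/\sigma)}$ is precisely what is needed so that $t$ exceeds a constant multiple of $\E Z$, allowing the concentration bound (which centers at $\E Z$) to be converted into a bound centered at $0$ without losing the order of magnitude. The upper threshold $t \leq \lambda M \sigma^2 / U$ confines the analysis to the sub-Gaussian regime of $\psi$ where the stated exponent is the correct shape; beyond this one enters the linear, Poissonian regime where a different form is needed.
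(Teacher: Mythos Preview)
The paper does not supply its own proof of this proposition: it is quoted verbatim as a known result, attributed in the header to Talagrand (1994) and Gin\'e--Guillou (2001), and is used only as a black-box tool in the proof of Lemma~5.2. So there is no in-paper argument to compare against.

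Your outline is essentially the route taken in the cited references: symmetrization plus a Dudley chaining bound under the VC covering-number hypothesis yields the moment estimate (this is Proposition~2.1 in Gin\'e--Guillou), and then Talagrand's Bennett-type concentration inequality, together with the assumption $\sqrt{M}\sigma \geq C_1 U \sqrt{\log(U/\sigma)}$ to absorb $\E Z$ into the deviation $t$, gives the exponential tail (Proposition~2.2 in Gin\'e--Guillou). One minor quibble: you write $\E\hat\sigma_M^2 \leq \sigma^2$, but in fact $\E\hat\sigma_M^2 = \E f(X_1)^2 \geq \sigma^2$; the correct passage from $\hat\sigma_M$ to $\sigma$ goes through a bound on $\E f^2$ by $\sigma^2 + (\E f)^2$ plus the uniform bound $U$, or alternatively one chains directly with the population $L_2$ metric after an additional symmetrization/contraction step. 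This is a routine fix and does not affect the overall strategy.
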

\begin{rem}
\label{DepVCD}
It can be deduced from the proof of Proposition~\ref{EIP}  in \citet{GG} that  constant \( L \) can be taken independent of \( \omega \). The constant \( C \) (and hence \( \lambda \)) in the case of large \( \omega \) can be chosen in the form \( C=\omega C_{0} \) for some constant  \( C_{0} \) not depending on \( \omega  \).
\end{rem}

\end{document}